\definecolor{mydarkblue}{rgb}{0,0.08,0.45}
\newcommand{\train}{\text{train}}
\newcommand{\val}{\text{val}}
\newcommand{\far}{\text{far}}
\DeclareMathOperator*{\argmin}{arg\,min}
\newcommand{\algorithmicinput}{\textbf{Input:}}
\newcommand{\algorithmicoutput}{\textbf{Output:}}
\newcommand{\INPUT}{
\item[\algorithmicinput]}
\newcommand{\OUTPUT}{
\item[\algorithmicoutput]}
\newtheoremstyle{new}
  {12pt}      
  {12pt}      
  {\itshape}  
  {}          
  {\bfseries\color{black}} 
  {.}         
  { }         
  {}          
\theoremstyle{new}
\newtheorem{theorem}{Theorem}[section]
\newtheorem{proposition}[theorem]{Proposition}
\newtheorem{lemma}[theorem]{Lemma}
\newtheorem{definition}[theorem]{Definition}
\Crefname{theorem}{Theorem}{Theorems}
\Crefname{corollary}{Corollary}{Corollaries}
\Crefname{proposition}{Proposition}{Propositions}
\Crefname{lemma}{Lemma}{Lemmas}
\Crefname{definition}{Definition}{Definitions}
\Crefname{example}{Example}{Examples}
\Crefname{remark}{Remark}{Remarks}
\Crefname{claim}{Claim}{Claims}
\definecolor{shadethmcolor}{cmyk}{0,0,0,0.075}    
\definecolor{shaderulecolor}{rgb}{1,1,1}   
\newtheoremstyle{shad}
  {12pt}      
  {12pt}      
  {\itshape }  
  {}          
  {\bfseries\color{black}} 
  {.}         
  { }         
  {}          
\theoremstyle{shad} 
\newtheoremstyle{shad*}
  {12pt}      
  {12pt}      
  {\itshape }  
  {}          
  {\bfseries\color{black}} 
  {.}         
  { }         
  {}          
\theoremstyle{shad*} 
\Crefname{theorem}{Theorem}{Theorems}
\Crefname{corbox}{Korollar}{Corollaries}
\Crefname{propbox}{Proposition}{Propositions}
\Crefname{lembox}{Lemma}{Lemmas}
\Crefname{exbox}{Beispiel}{Examples}
\Crefname{defbox}{Definition}{Definitions}
\Crefname{rembox}{Anmerkung}{Remarks}
\newcommand{\ind}{\mathds{1}}
\newcommand{\R}{\mathds{R}}
\newcommand{\N}{\mathds{N}}
\newcommand{\E}{\mathds{E}}
\newcommand{\bu}{\bm{u}}
\newcommand{\bx}{\bm{x}}
\newcommand{\bz}{\bm{z}}
\newcommand{\bX}{\bm{X}}
\newcommand{\bZ}{\bm{Z}}
\newcommand{\Dcal}{\mathcal{D}}
\newcommand{\Ical}{\mathcal{I}}
\newcommand{\Scal}{\mathcal{S}}
\newcommand{\Vcal}{\mathcal{V}}
\newcommand{\Wcal}{\mathcal{W}}
\renewcommand{\bar}{\overline}
\providecommand{\Pr}{}
\renewcommand{\Pr}{\mathbb{P}}
\newcommand{\wh}[1]{\widehat{#1}}
\title{Throwing Vines at the Wall: Structure Learning via Random Search}
\author{%
  Thibault Vatter \\
  University of Applied Sciences Western Switzerland
  \And
  Thomas Nagler \\
  LMU Munich \\
  Munich Center for Machine Learning
}
\begin{document}
\maketitle

\begin{abstract}
  Vine copulas offer flexible multivariate dependence modeling and have become widely used in machine learning. Yet, structure learning remains a key challenge. Early heuristics, such as Dissmann's greedy algorithm, are still considered the gold standard but are often suboptimal. We propose random search algorithms  and a statistical framework based on model confidence sets, to improve structure selection, provide theoretical guarantees on selection probabilities and excess risk, as well as serve as a foundation for ensembling. Empirical results on real-world data sets show that our methods consistently outperform state-of-the-art approaches.
\end{abstract}

\section{Introduction}

Copulas provide a flexible framework for modeling complex multivariate distributions by disentangling marginal behavior from the dependence structure \citep[see, e.g.,][]{nelsen2007introduction,joe2014}.
Among them, the class of \emph{vine copulas} \citep{bedfordcooke2001,aasczadofrigessibakken2009} has become particularly popular in machine learning because it balances flexibility with tractability better than more classical copula families.
Recent applications span a wide range of areas, including domain adaptation \citep{Lopez-Paz2013}, variational inference \citep{Tran2015}, causal inference \citep{Lopez-Paz2016}, clustering \citep{tekumalla2017vine}, Bayesian optimization \citep{botied}, conformal prediction \citep{park2025semiparametric}, and generative modeling \citep{kulkarni2018nonparametric,Konstantelos2018,sun2019,Tagasovska2018b}.

A vine copula on $d$ variables consists of two components: its \emph{vine structure}, a nested sequence of undirected trees, and $d(d-1)/2$ bivariate (conditional) copulas, called \emph{pair-copulas}, indexed by its edges.
Vine copulas are most popular on problems from a handful to a few dozen variables. In this regime, the number of pair-copulas remains tractable, while the flexibility of the model can capture complex dependence patterns.
However, there exists $ 2^{(d-3)(d-2)/2 - 1}d!$ possible vines on $d$ variables \citep{morales2011count,joe2011regular}, which is of the same order as the number of directed acyclic graphs on $d$ nodes. This super-exponential growth makes exhaustive search infeasible beyond a few variables.
As a result, nearly all applications rely on greedy heuristics such as the spanning tree algorithm of \citet{Dissmann2013} or the forward selection procedure of \citet{kraus2017d}.

Despite their simplicity and lack of theoretical foundation, numerous attempts at improving on them had very limited success (see \cref{sec:related_work}).
The recent review of \citet{Czado2022} concludes that such heuristics still represent the state of the art, and that effective structure selection remains an important open challenge in vine methodology.

\paragraph{Contributions}
This paper challenges the prevailing view that the standard heuristics are difficult to improve upon. In particular:
\begin{itemize}
  \item We propose a vine structure learning algorithm based on hold-out random search and show that it outperforms state-of-the-art methods across a range of real-data settings for tasks ranging from generative modeling to regression.
  \item We integrate random search with a state-of-the-art algorithm for constructing \emph{model confidence sets} \citep{kim2025locally} tailored to vine structures, prove its validity in our context, and provide an efficient implementation. This allows to assess in-sample whether alternatives outperform a baseline heuristic and yields a principled set of competitive models.
  \item We demonstrate that vine regression methods based on averaging over the model confidence set consistently outperform single-vine approaches, including the current state of the art.
\end{itemize}

\begin{wrapfigure}{r}{0.52\textwidth}
  \centering
  \vspace{-0.5cm}
  \includegraphics[width=0.52\textwidth]{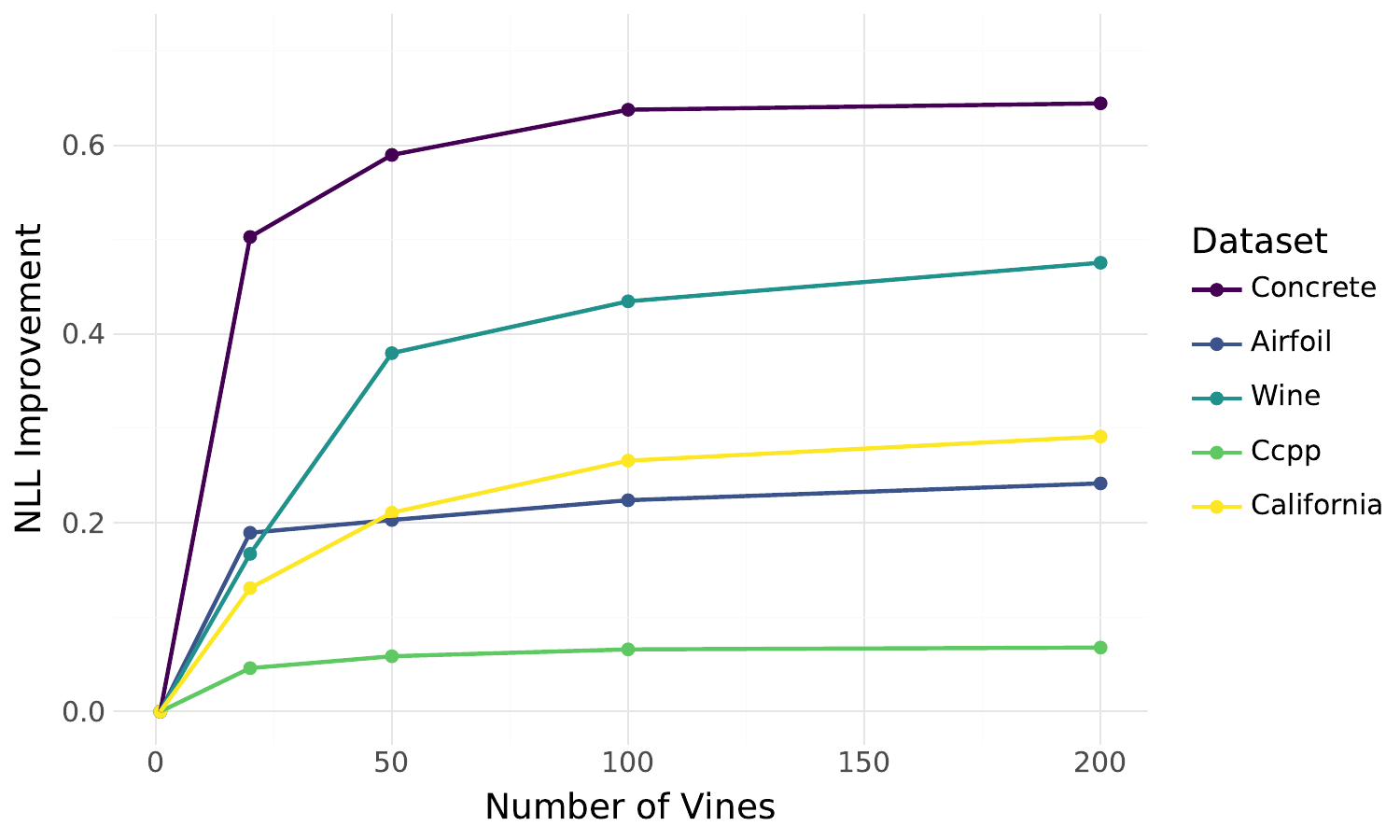}
  \caption{Average improvement in out-of-sample negative log-likelihood over \citet{Dissmann2013}.}
  \label{fig:intro-figure}
  \vspace{-1.0cm}
\end{wrapfigure}

In \cref{fig:intro-figure}, we illustrate the potential for improvement by comparing the out-of-sample negative log-likelihood of \citet{Dissmann2013} to that of vines sampled uniformly at random, selecting the best candidates using an independent validation set (see \cref{sec:experiments} for details).
From the picture, it is clear that the greedy heuristic is suboptimal, and that random search performance improves with the number of candidates.

The new algorithms are conceptually simple, straightforward to implement, and provide immediate benefits for applications of vine copulas in machine learning and related fields.

\paragraph{Outline}
In \cref{sec:background}, we review necessary background on vine copulas and structure learning.
In \cref{sec:methodology}, we introduce our new structure learning algorithms.
In \cref{sec:experiments}, we benchmark them against state-of-the-art approaches using real-world datasets.
We conclude in \cref{sec:conclusion}.

\section{Background} \label{sec:background}

\subsection{Copulas and vines}

Mathematically, a copula is a multivariate distribution with standard uniform margins.
From \citet{sklar1959}, a random vector $\bX \in \R^d$ has joint distribution $F$ with univariate marginal distributions $F_{1}, \dots, F_d$, if and only if there exists a copula $C$ such that, for each $x\in\R^d$,
\begin{align*}
  F\left(x_1,\,\ldots,\,x_d \right) = C\left\{ F_{1}(x_1),\,\ldots,\,F_{d}(x_d) \right\},
\end{align*}
which is unique if the margins are continuous.
Taking partial derivatives on both sides of this equation, the joint density is
the product of marginal densities $f_j = \partial F_j / \partial x_j$ and the \emph{copula density} $c = \partial^d C/\partial u_1 \cdots \partial u_d$:
\begin{align*}
  f(x_1, \dots, x_d) = c\left(F_1(x_1), \dots, F_d(x_d)\right) \times \textstyle\prod_{j = 1}^d f_j(x_j).
\end{align*}
Taking logarithm on both sides further implies that the joint log-likelihood is the sum of the marginal and copula log-likelihoods,
which can be exploited in a two-step estimation procedure \citep{genest1995semiparametric, joe1996estimation}:
first estimate each of the marginal distributions, and then the copula based on the \emph{pseudo-observations} $\wh U_i = \wh F_i(X_i)$, $i = 1, \dots, d$.


Following the seminal work of \citet{joe1996} and \citet{bedfordcooke2001, Bedford2002}, any copula density $c$ can be decomposed into a product of $d(d-1)/2$ bivariate (conditional) copula densities.
The order of conditioning in this decomposition can be organized using the following graphical structure.

\begin{definition} \label{def:vine_structure}
  A \emph{vine} is a sequence of spanning trees $(V_t, E_t)$ with $V_t$ and $E_t$ respectively the nodes and edges for $t = 1, \dots, d-1$, satisfying the following conditions: (i) $V_1=\{1, \dots, d\}$, (ii) $V_t=E_{t-1}$ for $t \ge 2$, and (iii) if two nodes in $(V_{t+1}, E_{t+1})$ are joined by an edge, the corresponding edges in $(V_t, E_t)$ must share a common node.
\end{definition}

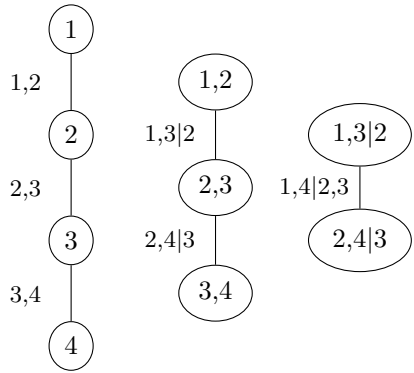
\begin{wrapfigure}{r}{0.4\textwidth}
  \vspace{-0.4cm}
  \centering
  \tikzstyle{VineNode} = [ellipse, fill = white, draw = black, text = black, align = center, minimum height = 0.1cm, minimum width = 0.1cm]
  \tikzstyle{DummyNode}  = [draw = none, fill = none, text = white]
  \newcommand{\xshiftLabela}{1.2cm}
  \newcommand{\yshiftLabela}{0.0cm}
  \newcommand{\xshiftLabelb}{-0.6cm}
  \newcommand{\yshiftLabelb}{-0.4cm}

  \resizebox{4.6cm}{3.6cm}{
    \begin{tikzpicture} [every node/.style = VineNode, node distance =0.7cm]
      \node (1){1}
      node[DummyNode] (D1-2) [below of = 1]{}
      node  (2)   [below of = D1-2]{2}
      node[DummyNode] (D2-3) [below of = 2]{}
      node  (3)   [below of = D2-3]{3}
      node[DummyNode] (D3-4) [below of = 3]{}
      node  (4)   [below of = D3-4]{4}

      node (1-2)  [right of = D1-2, xshift = \xshiftLabela, yshift = \yshiftLabela]{1,2}
      node[DummyNode] (D1-3_2) [right of = 2, xshift = \xshiftLabela, yshift = \yshiftLabela]{}
      node  (2-3)   [below of = D1-3_2]{2,3}
      node[DummyNode] (D2-4_3) [below of = 2-3]{}
      node  (3-4)   [right of = D3-4, xshift = \xshiftLabela, yshift = \yshiftLabela]{3,4}

      node (1-3_2) [right of = D1-3_2, xshift = \xshiftLabela, yshift = \yshiftLabela]{1,3$\vert$2}
      node[DummyNode] (D1-4_2-3) [right of = 2, xshift = \xshiftLabela, yshift = \yshiftLabela]{}
      node (2-4_3) [right of = D2-4_3, xshift = \xshiftLabela, yshift = \yshiftLabela]{2,4$\vert$3};

      \draw (1) to node[draw=none, fill = none, font = \footnotesize,
      above, xshift = \xshiftLabelb, yshift = \yshiftLabelb] {1,2} (2);
      \draw (2) to node[draw=none, fill = none, font = \footnotesize,
      above, xshift = \xshiftLabelb, yshift = \yshiftLabelb] {2,3} (3);
      \draw (3) to node[draw=none, fill = none, font = \footnotesize,
      above, xshift = \xshiftLabelb, yshift = \yshiftLabelb] {3,4} (4);
      \draw (1-2) to node[draw=none, fill = none, font = \footnotesize,
      above, xshift = \xshiftLabelb, yshift = \yshiftLabelb] {1,3$\vert$2} (2-3);
      \draw (2-3) to node[draw=none, fill = none, font = \footnotesize,
      above, xshift = \xshiftLabelb, yshift = \yshiftLabelb] {2,4$\vert$3} (3-4);
      \draw (1-3_2) to node[draw=none, fill = none, font = \footnotesize,
      above, xshift = \xshiftLabelb, yshift = \yshiftLabelb] {1,4$\vert$2,3} (2-4_3);

    \end{tikzpicture}
  }
  \caption{A vine structure on four variables.}
  \label{Vines:RVine_fig}
  \vspace{-0.5cm}
  \end{wrapfigure}
A simple example of a vine tree sequence is shown in \cref{Vines:RVine_fig}.
A \emph{vine copula} identifies each edge $e$ of the vine structure, with a label $\{j_e, k_e | D_e\}$ and a bivariate copula $C_{j_e, k_e| D_e}$, called a \emph{pair-copula}.
The sets $\{j_e, k_e \} \subset \{1, \ldots, d\}$ and $D_e \subset \{1, \ldots, d\}\setminus \{j_e, k_e\}$ are called the \emph{conditioned set} and \emph{conditioning set}, respectively, and we refer to \citet{Czado2022} for a precise definition.
Any copula density then factorizes as
\begin{align} \label{eq:vine_density}
  c(\bm u) & = \prod_{t=1}^{d-1} \prod_{e \in E_t} c_{j_e, k_e| D_e} \bigl(u_{j_e|D_e}, u_{k_e|D_e} \mid  \bm u_{D_e} \bigr),
\end{align}
with $u_{j_e|D_e} = C_{j_e|D_e}(u_{j_e}|\bm u_{D_e})$, $C_{j_e|D_e}$ the conditional distribution of $U_{j_e}$ given $\bm U_{D_e}$, and similarly for $u_{k_e|D_e}$.
As an example, decomposing the Gaussian copula density as in \eqref{eq:vine_density} leads to Gaussian pair-copulas $c_{j_e, k_e |D_e}$ with parameter equal to the partial correlation coefficient $\rho_{j_e, k_e ; D_e}$.

Thanks to the proximity condition, $u_{j_e|D_e}$ for any edge $e \in E_{t + 1}$ can be computed recursively from the pair-copulas in trees $1, \dots, t$.
This way, the parameters of vine copulas can be estimated by iterating between parameter estimation and conditional distribution evaluations, going from the first tree to the last \citep{aasczadofrigessibakken2009}. A similar decomposition holds when some variables are discrete, effectively by replacing derivatives of the distribution function by finite differences \citep{panagiotelis2012, funk2025towards}.

\subsection{Why the Structure Matters}

The key to the practical success of vine copulas is a \emph{simplifying assumption} that the conditional copulas in \eqref{eq:vine_density} do not depend on the conditioning value $\bu_{D_e}$.
This reduces the model to a collection of bivariate copulas, which are straightforward to estimate, yet still highly flexible. Each pair-copula can be chosen independently, allowing complex dependence patterns with varying strength, asymmetry, and non-linear tail behavior. More in-depth discussions of the assumption can be found in \citet{stoeber2013simplified, spanhel2019simplified, nagler2025simplified}.


The simplifying assumption also elevates the importance of the vine structure: different structures can yield substantially different approximations of the true copula density. Finding a structure that fits the data well is therefore crucial, both for predictive performance and for interpretability.

\subsection{Related Work} \label{sec:related_work}

Several previous works have addressed structure learning in vine copula models.

\paragraph{General-purpose methods}
The standard approach was introduced by \citet{Dissmann2013}, whose greedy algorithm constructs a maximum spanning tree based on Kendall's $\tau$. Variants using information criteria or goodness-of-fit $p$-values \citep{czado2013selection} offer no clear benefits but incur higher computational cost.
Attempts to incorporate tests of the simplifying assumption \citep{Kraus2017} also failed to consistently improve performance. More elaborate search strategies based on MCMC \citep{gruber2015sequential,gruber2018} or neural networks \citep{sun2019} can yield gains but are prohibitively expensive: even in the low-dimensional examples provided in the papers, the algorithm run times are on the order, or larger than, those of exhaustive search.
Randomized search variants have also been explored. Sampling from common variable orders \citep{zhu2020common} does not improve out-of-sample performance on real data and
the Monte Carlo tree search approach with lookahead of \citet{chang19a} only applies to truncated Gaussian models. Overall, the greedy heuristic of \citet{Dissmann2013} remains the default in practice \citep{Czado2022}.

\paragraph{Regression problems}
Specialized algorithms have been developed for regression tasks. These methods restrict attention to structures that admit closed-form conditional distributions, often star- or path-shaped trees (D- and C-vines). Some extend Dissmann’s approach \citep{chang2019prediction, chang19a, cooke2019vine}, while others use forward variable selection with early stopping \citep{kraus2017d, Schallhorn2017, zhu2021simplified, tepegjozova2022nonparametric}. Their applicability is thus narrower than general-purpose methods.

\paragraph{Sparse models}
Another line of work focuses on sparsity, i.e., identifying independence copulas among pairs, which becomes necessary in higher dimensions to keep models tractable.
Approaches include selecting truncation or thresholding levels \citep{kurowicka2011optimal, BrechmannCzadoAas2012, brechmann2015truncation, joe2018parsimonious, nagler2019model}, typically combined with spanning-tree heuristics. A different perspective exploits connections between vine copulas and Gaussian DAGs \citep{mueller2018representing,muller2019dependence,muller2019selection}, aiming primarily to detect conditional independencies rather than fully optimizing the vine structure. These questions are important but orthogonal to the focus of this paper, which is the structure itself, rather than the pair-copulas.

\section{Methodology} \label{sec:methodology}

\subsection{Setup} \label{sec:setup}

Let $\Dcal = \{\bm Z_i : 1 \leq i \leq n\}$ be a dataset of $n$ i.i.d.\ samples in $d$ dimensions, with $\bm Z_i = (Z_{i,1}, \ldots, Z_{i,d}) \in \R^d$.
In generative modeling tasks, these are the features of interest, while in regression-like problems we consider $\bm Z = (\bX, Y)$ as a feature--label pair.
Given a vine structure $\Vcal$ and a dataset $\Dcal$, we assume access to an algorithm that fits an estimated joint density $\wh f_{\Vcal, \Dcal} \in \mathcal{F}$, with $\mathcal{F}$ the space of probability densities in $\R^d$.
This density is composed of estimated marginal distributions and a simplified vine copula with structure $\Vcal$.

To evaluate different models, let $L\colon \mathcal{F} \times \R^d \to \R$ denote a loss functional, where $L(f, \bm Z)$ quantifies the quality of $f \in \mathcal{F}$ based on an observation $\bm Z$.
Common choices include the negative log-likelihood $L(f, \bm Z) = -\log f(\bm Z)$ for generative modeling, or similarly $L(f, (\bX, Y)) = -\log f(Y|\bX)$ for regression-type problems, and we refer to \citet{gneiting2014probabilistic} for a comprehensive review in the context of probabilistic forecasting.

The goal of model selection is to find, among a set of candidates, the one with minimal expected loss $R(f) = \E[L(f, \bm Z)]$.
In practice however, $R(f)$ is unknown and must be estimated from data, which can be done, e.g., via hold-out validation or cross-validation \citep{stone1974cross}.
In the remainder of the paper, all expectations and probabilities are
understood conditionally on the training set. Specifically, after splitting
$\Dcal$ into disjoint sets $\Dcal_\train$ and $\Dcal_\val$, we write
\begin{align*}
  \E'[\cdot]
  =
  \E[\cdot\mid\Dcal_\train],
  \quad
  \Pr'(\cdot)
  =
  \Pr(\cdot\mid\Dcal_\train), \quad R'(f)
  =
  \E'[L(f,\bm Z)], \quad R'(\Vcal) = R'(\wh f_{\Vcal, \Dcal_\train}).
\end{align*}

\subsection{Random Search Algorithm} \label{sec:RS}

\begin{algorithm}[th]
  \caption{Hold-out Random Search for Vines.}
  \label{alg:RS}
  \begin{algorithmic}[1]
    \INPUT Data set $\Dcal$ of size $n$, hold-out ratio $\eta$, number of candidate models $M$.
    \OUTPUT Optimal vine structure $\wh \Vcal$.\smallskip
    \State Split $\Dcal$ into training and validation sets:
    $\Dcal_\train \cup \Dcal_\val=\Dcal$, $n_{\val} = \lfloor\eta n\rfloor$, $n_{\train} = n - n_{\val}$.
    \State Generate candidate structures $\Theta = \{\Vcal_{1}, \dots, \Vcal_{M}\}$.
    \State For each candidate $\Vcal \in \Theta$:\smallskip
    \State \hspace*{12pt} Use the training set to fit a density $\wh f_{\Vcal, \Dcal_\train}$.
    \State \hspace*{12pt} Compute the validation loss
    $
      R_\val(\Vcal) = \frac{1}{|\Dcal_{\val}|} \sum_{\bm z \in \Dcal_{\val}} L(\wh f_{\Vcal, \Dcal_\train}, \bm z).
    $
    \State Select the structure with minimal validation loss:
    $
      \wh \Vcal = \argmin_{\Vcal \in \Theta} R_\val(\Vcal).
    $
    \end{algorithmic}
\end{algorithm}

To identify a suitable vine structure, we propose a simple and easy-to-implement algorithm based on hold-out random search, summarized in \cref{alg:RS}.
The procedure consists of splitting the data into training and validation sets, generating candidate structures at random, fitting them on the training data, and selecting the one with the smallest validation loss.

For a typical vine model $\wh f_{\Vcal, \Dcal}$, both training and inference incurs an $O(d^2)$ cost per observation, where the factor $d^2$ comes from the quadratic number of pair-copulas in the model.
This leads to an overall runtime complexity of $O(M n d^2)$ for \cref{alg:RS}.
Note that it is embarrassingly parallel over the $M$ candidates, as well as, for a single candidate, tree-wise over pair-copulas. Furthermore, one can  reduce the dependence on $d$ from quadratic to linear by truncating the vine at a fixed level \citep[e.g.,][]{BrechmannCzadoAas2012}.

For generating vine structures, we propose to sample uniformly at random using the algorithm of \citet{joe2011regular}; see also \citet[Section 6.13]{joe2014} and its implementation in the software libraries \texttt{VineCopula} and \texttt{pyvinecopulib} \citep{vinecopula2025, pyvinecopulib}. Other sampling schemes, potentially involving the training data, could be considered as well and also fit the theoretical framework and results ahead; see \cref{sec:wilson} for a sampling scheme based on local perturbations of the \cite{Dissmann2013} structure.

While the approach naturally extends to cross-validation, we argue that the additional computational budget is better invested in evaluating more random candidates; see  \cref{fig:intro-figure} and \cref{sec:experiments}.
In addition, cross-validation does not solve the problem that the selected model may be statistically indistinguishable from other candidates, which we address in the next subsection.

\subsection{Vine Confidence Sets}

On any given data set, we may want to assess whether the model found by \cref{alg:RS} is better than a benchmark.
When this is not the case, we might prefer the benchmark model for qualitative reasons such as ease of interpretability or easier comparison to previous analyses. And even if the benchmark is outperformed, multiple candidates from the random search may be statistically indistinguishable and used for model averaging.
We tackle this problem via \emph{model confidence sets (MCS)} \citep{hansen2011model}; roughly speaking, subsets of the candidates that contain the ``best ones'' with high probability.

Define the set of optimal models according to expected out-of-sample loss as
\[
  \Theta^* = \argmin_{\Vcal \in \Theta} R'(\Vcal),
\]
explicitly allowing for the possibility of multiple ones that are equally good.
We emphasize that, in our setup, the goal is not to identify a fixed ground truth, but to find structures that lead to trained models with minimal out-of-sample loss.
The optimal set $\Theta^*$ is itself random as determined by the training data, estimation algorithm, and candidate set.
In particular, we do not assume that the data was generated from a vine copula model.
An MCS is then another random subset that contains each of the optimal models with high probability, as formalized in the following definition.
\begin{definition}\label{def:MCS}
  An $\alpha$-level MCS is a random set $\wh \Theta \subseteq \Theta$ with $\Pr'(\Vcal \in \wh \Theta) \ge 1 - \alpha$, $\forall \Vcal \in \Theta^*$.
\end{definition}
Such an MCS offers marginal guarantees, in the sense that the probabilistic statement applies to each optimal model separately, rather than to the set as a whole.
An alternative concept is a uniform MCS requiring that  $\Theta^* \subseteq \wh \Theta$ with high probability. While this is a stronger guarantee, it is more difficult to achieve in practice, and may lead to unnecessarily large sets including many suboptimal models.


\subsection{A Vine MCS Algorithm}

Several methods for estimating MCSs have been proposed in the literature \citep{hansen2011model, lei2020cross, mogstad2024inference, zhang2024winners, kissel2022black}.
They generally rely on discrete argmin inference, where the null hypothesis for the optimality of a given candidate $\Vcal \in \Theta$ can be written as
$
  H_{0,\Vcal}\colon \; R'(\Vcal) \le \min_{\Wcal \in \Theta} R'(\Wcal).
$
If $\widehat T_{\Vcal}$ is a statistic satisfying $\widehat T_{\Vcal} \to N(0,1)$ under $H_{0,\Vcal}$, a natural MCS estimator obtains by inverting the test, that is $\wh \Theta
=\{\Vcal : \widehat T_{\Vcal} \le \Phi^{-1}(1-\alpha)\,\}$,
where $\Phi$ is the standard normal distribution function.
In this work, we use the DA-test of \citet{kim2025locally} because of its computational simplicity and good empirical performance.
A full description of the test can be found in \cref{sec:mcs_details} for completeness, along with a runtime comparison with alternative methods.

\begin{algorithm}[th]
  \caption{MCS for Vines.}
  \label{alg:MCS}
  \begin{algorithmic}[1]
    \INPUT Data set $\Dcal$, hold-out ratio $\eta$, number of candidate models $M$, confidence parameter $\alpha$.
    \OUTPUT Model confidence set of vine structure $\wh \Theta$.
    \State Follow Steps 1-3 from \cref{alg:RS}.
    \State Compute losses for $1 \le i \le n_\val$ and $1 \le m \le M$
    $L_{i, m} = L(\wh f_{\Vcal_m, \Dcal_{\train}}, \bZ_i)$, $\bZ_i \in \Dcal_\val$.
    \State Run the MCS algorithm on the scores $L_{i, m}$, yielding an $\alpha$-MCS of indices $\wh \Ical \subseteq \{1, \dots, M\}$.
    \State Return the vine MCS $\wh \Theta = \{\Vcal_m\colon m \in \wh \Ical\}.$
  \end{algorithmic}
\end{algorithm}

\cref{alg:MCS} describes how this MCS method fits into our vine structure learning pipeline. The algorithm has runtime complexity $O(M n d^2 + M n)$, where the first term is from the first three steps (see the discussion in \cref{sec:RS}), the second from the MCS procedure, and we refer to \cref{sec:mcs_complexity} for this part of the complexity analysis.
It also comes with the following theoretical guarantees.
\begin{proposition} \label{prop:mcs_marginal}
  Whenever
  $
    \limsup_{n \to \infty}\max_{\Vcal \in \Theta}\E'[| L(\wh f_{\Vcal, \Dcal_\train}, \bm Z) |^3] < \infty,
  $
  \cref{alg:MCS} yields \\
  (i) $\liminf_{n \to \infty}  \Pr'(\Vcal \in \wh \Theta) \ge 1 - \alpha$ for all $ \Vcal \in \Theta^*$,  \\
  (ii) $  \lim_{n \to \infty}  \Pr'\bigl(\exists \Vcal \in \wh \Theta\colon R'(\Vcal)-R'(\Vcal^{*}) > a_n/  \sqrt{ n}\bigr) \to 0$ for any  $\Vcal^* \in \Theta^*$ and $a_n \to \infty$.
\end{proposition}
The proof is given in \cref{sec:proofs}.
Part (i) states that $\wh \Theta$ is indeed an approximate $\alpha-$level MCS.
Part (ii) implies that $\wh \Theta$ excludes, with probability tending to one, all candidates whose excess risk exceeds a statistically detectable threshold $a_n/\sqrt n$. Since $a_n$ may diverge arbitrarily slowly, this threshold vanishes, so any model falling below it are asymptotically risk-equivalent to the optimal ones.
Both guarantees are asymptotic and conditional on the training data, which determines the optimal set $\Theta^*$.
The condition on the third moment of the loss is mild and typically satisfied in practice.

\subsection{MCS Ensembles} \label{sec:mcs_ensembles}

Usually, an MCS contains multiple models, whose performance is effectively indistinguishable given the limited amount of observations. In such cases, it is natural to take an ensemble approach. This may have the additional benefit of variance reduction and improved predictive performance, as is well-known from the literature on model averaging \citep{hoeting1999bayesian, burnham2002model}.
In particular, we propose to define the MCS mixture model
\begin{align*}
  \wh f_{\wh \Theta}(\bz) = \frac{1}{|\wh \Theta|} \sum_{\Vcal \in \wh \Theta} \wh f_{\Vcal,\mathcal{D}}(\bz),
\end{align*}
which is straightforward to use in both downstream generative and regression-type tasks.
Additionally, for any loss that is convex in $f$, the risk of the ensemble itself is no worse than the average risk of the individual models in the MCS, and is close to the risk of the best one, i.e.,
\begin{align*}
  R(\wh f_{\wh\Theta})
  \le
  \frac{1}{|\wh\Theta|}
  \sum_{\Vcal\in\wh\Theta}
  R'(\Vcal)
  \le
  R'(\Vcal^{*})+
  O_{\Pr}(a_n/\sqrt n),
\end{align*}
with the first inequality by Jensen's inequality and the second by \cref{prop:mcs_marginal}.

For regression problems, we adapt the estimating equation approach of \citet{Nagler2018} to the MCS mixture model. Specifically, let $\psi_\beta$ be a function such that $\E[\psi_\beta(Y) \mid \bX = \bx] = 0$ if and only if $\beta$ is the target of inference. Examples are $\psi_\beta(y) = y - \beta$ for the conditional mean $\beta(\bx) = \E[Y \mid \bX = \bx]$ or $\psi_\beta(y) = \ind\{y < \beta(\bx)\} - \tau$ for the conditional $\tau$-quantile.
We now approximate the conditional expectation by the fitted model, i.e., we solve
\begin{align}\label{eq:estimating_equation}
  \int \psi_{\beta} \bigl( y \bigr) \wh f_{\wh \Theta}(y \mid \bx) dy = 0.
\end{align}
In practice, we consider a set of equally spaced grid points $\{y_1, \dots, y_G\}$. If \eqref{eq:estimating_equation} holds, we also have
$\sum_{g=1}^G \psi_{\beta}(y_g) \wh f_Y(y_g) \sum_{\Vcal \in \wh \Theta} \wh c_{\Vcal, \mathcal{D}}(\wh F_Y(\bx), \wh F_Y(y_g))\approx 0,
$
which can easily be solved numerically.
This approach is convenient because it puts no restrictions on the structure of the vine copula, facilitates model averaging, and can be used for all sorts of conditional quantities.
We refer to \cref{sec:comp_details} and \cref{sec:impl_details} for additional discussions and computational details.


\section{Experiments} \label{sec:experiments}

To validate the new methods, we provide a comprehensive comparison using popular real-world data sets on density estimation and regression tasks.

\subsection{Setup}

\paragraph{Data sets}

We use five data sets from the UCI repository \citep{Dua:2019}, \texttt{Energy Efficiency} ($n=768$, $p=8$), \texttt{Concrete Compressive Strength} ($n=1030$, $p=8$), \texttt{Airfoil Self-Noise} ($n=1503$, $p=5$),   \texttt{Wine quality} ($n=4898$, $p=12$), \texttt{Combined Cycle Power Plant (Ccpp)} ($n=9568$, $p=5$), as well as the \texttt{California Housing} dataset ($n=20640$, $p=8$) \citep{Pace1997}. All have a continuous label variables and are commonly used for benchmarking density estimation and tabular regression models.
The number of features is in the typical range for applications of vine copula models, see also the discussions in  \cref{sec:related_work} and \cref{sec:conclusion} for additional challenges in higher-dimensional settings.
For all datasets, we standardize the features to have mean zero and unit variance, randomly split the data into 80\% training, and 20\% test sets.
For methods requiring a validation set, we further split the training
set into 75\% training and 25\% validation.

\paragraph{Computational details} \label{sec:comp_details}

All experiments were run on a single workstation equipped with an AMD Ryzen Threadripper 2990WX 32-core processor and 125 GiB of RAM; no cluster or GPU resources were used.
A \texttt{Python} package implementing the proposed methods and scripts to reproduce the experiments are available in the supplementary materials and described in \cref{sec:impl_details}.
It relies mostly on the \texttt{pyvinecopulib} package \citep{pyvinecopulib} for vine copula modeling.
Marginal distributions are estimated by kernel local-likelihood and the pair-copulas using the nonparametric \texttt{TLL} estimator, which combines speed and flexibility \citep{nagler2017nonparametric}.

For MCS inference, we rely on a \texttt{Python} translation of the \texttt{R} \citep{R2025} script gracefully provided by the authors of \citet{kim2025locally}, improved to scale as $O(M n)$ instead of $O(M^2 n)$; see \cref{sec:mcs_complexity} for details.
All experiments were repeated 10 times with different random seeds, and we report the average performance and standard error across these replications.

\subsection{Density Estimation}

In this experiment, we assess the quality of density estimation by stacking the features and the label, so that the problem becomes $(p+1)$-dimensional.
We compare the following methods:
\begin{itemize}
  \item \texttt{Dissmann}: A first benchmark method using the heuristic of \citet{Dissmann2013}, based on absolute Kendall's $\tau$.
  \item \texttt{Kraus}: A second benchmark method using the heuristic of \citet{Kraus2017}, based on a combination of Kendall's $\tau$ and $p$-values of tests for the simplifying assumption.
  \item \texttt{RS-B (M)}: Our random search \cref{alg:RS} with $M$ candidates, using the best model on the validation set with $L(f, \bm Z) = -\log f(\bm Z)$.
  \item \texttt{RS-E (M)}: An ensemble over the MCS obtained by \cref{alg:MCS} with $M$ candidates with the same loss as \texttt{RS-B} and $\alpha = 0.05$ (see \cref{sec:mcs_ensembles}).
\end{itemize}
Other approaches mentioned in \cref{sec:related_work} are not included in the comparison because of their prohibitive computational cost or lack of general applicability.

\begin{table*}[th]\small
  \caption{Rounded test NLL (standard error) for density estimation. Best results in bold.}
  \label{tab:NLL}
  \begin{tabular*}{\linewidth}{@{\extracolsep{\fill}}lllllll}
    \toprule
    & Energy & Concrete & Airfoil & Wine & Ccpp & California \\
    \midrule\addlinespace[2.5pt]
    \texttt{Dissmann}   & 1.95 (0.35) & 7.14 (0.11) & 3.25 (0.04) & 8.49 (0.19) & 6.80 (0.01)  & 3.63 (0.16) \\
    \texttt{Kraus}      & 1.51 (0.32) & 7.16 (0.12) & 3.27 (0.04) & 8.39 (0.15) & 6.80 (0.01)  & 3.60 (0.17)  \\
    \texttt{RS-B (50)}  & 0.47 (0.36) & 7.08 (0.11) & 3.19 (0.06) & 8.51 (0.19) & 6.77 (0.01) & 3.49 (0.17) \\
    \texttt{RS-B (100)} & 0.36 (0.36) & 7.05 (0.14) & 3.20 (0.06) & 8.51 (0.19) & 6.76 (0.01) & 3.47 (0.16) \\
    \texttt{RS-B (500)} & 0.20 (0.37) & 7.04 (0.11) & 3.09 (0.03) & 8.51 (0.19) & 6.75 (0.01) & 3.45 (0.17) \\
    \texttt{RS-E (50)}  & 0.12 (0.43) & 6.55 (0.10) & 3.05 (0.05) & 8.11 (0.18) & 6.74 (0.01) & 3.41 (0.16) \\
    \texttt{RS-E (100)} & 0.13 (0.34) & 6.51 (0.09) & 3.03 (0.04) & 8.06 (0.19) & 6.74 (0.01) & 3.36 (0.16) \\
    \texttt{RS-E (500)} & \textbf{-0.28 (0.32)} & \textbf{6.49 (0.09)} & \textbf{3.00 (0.04)} & \textbf{7.92 (0.18)} & \textbf{6.73 (0.01)} & \textbf{3.31 (0.17)} \\
    \bottomrule
  \end{tabular*}
\end{table*}

In \cref{tab:NLL}, we show the average negative log-likelihood (NLL) on test data.
Except for the wine data set, both of our random search approaches outperform the benchmarks, irrespective of the number of candidates $M$.
The performance improves with $M$, as expected, and our \texttt{RS-E (500)} method is the best on all data sets, albeit the margin is small in in some cases.
The results also explain why the data set Energy is not included in \cref{fig:intro-figure}: the performance gain over the benchmark is simply too large to be displayed on the same scale.

\begin{wrapfigure}{r}{0.5\textwidth}
  \vspace{-0.3cm}
  \centering
  \includegraphics[width=\linewidth]{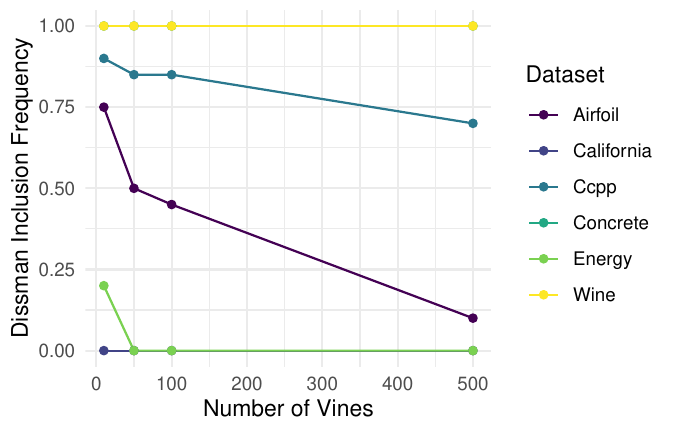}
  \caption{Density estimation: how often the \texttt{Dissmann} method lies in $95\%$-MCS (computed on the training set), estimated over 20 seeds.}
  \label{fig:dissman_in_mcs}
  \vspace{-0.4cm}
\end{wrapfigure}
A key benefit of the MCS approach is that it enables assessing whether a benchmark is statistically indistinguishable from the best candidates using only the training data. 
\cref{fig:dissman_in_mcs} reports the frequency with which \texttt{Dissmann} belongs to the $95\%$-MCS across datasets, which closely mirrors the results in \cref{tab:NLL}.
Notably, for the only dataset where \texttt{Dissmann} is better than \texttt{RS-B} (Wine), it is included in the MCS in every replication. In practice, this provides a simple check: the benchmark can be retained when it is competitive, without risking performance loss when it is clearly inferior.

As mentioned in \cref{sec:RS}, different structure sampling schemes could be considered. We provide such an example in \cref{sec:wilson}, where sampling is based on local perturbations of the \cite{Dissmann2013} structure. The results are qualitatively similar to \texttt{RS-B} and \texttt{RS-E}, albeit with the local perturbation approach slightly underperforming compared to the uniform sampling scheme.

\subsection{Mean and Median Regression}

In this experiment, we compare the following methods:
\begin{itemize}
  \item \texttt{Dissmann}: The same heuristic as above, combined with \citet{Nagler2018} to turn the density estimate into a conditional mean or median regression function (see \cref{sec:mcs_ensembles}).
  \item \texttt{Kraus}: The method of \citet{kraus2017d}, which uses the fact that D-Vines admit closed-form conditional distributions.
  \item \texttt{RS-B (M)}: Our random search \cref{alg:RS} with $M$ candidates, using the best model on $\Dcal_{\text{val}}$ with $L(f, \bm Z) = -\log f(Y | \bm X)$, turned into a regression function as for \texttt{Dissmann}.
  \item \texttt{RS-E (M)}: Same, albeit for \cref{alg:MCS} with the MCS using $\alpha = 0.05$, and \citet{Nagler2018} adapted to ensembles as described in \cref{sec:mcs_ensembles}.
\end{itemize}

\begin{table*}[th]\small
  \caption{Rounded test RMSE (standard error) for mean regression. Best results in bold.}
  \label{tab:RMSE}
  \begin{tabular*}{\linewidth}{@{\extracolsep{\fill}}lllllll}
    \toprule
    & Energy & Concrete & Airfoil & Wine & Ccpp & California \\
    \midrule\addlinespace[2.5pt]
    \texttt{Dissmann} & 2.76 (0.10) & 7.12 (0.13) & 4.52 (0.08) & 0.63 (0.01) & 4.09 (0.03) & 0.66 (0.00) \\
    \texttt{Kraus} & 3.4 (0.13) & 7.09 (0.14) & 4.48 (0.10) & 0.63 (0.01) & 4.26 (0.02) & 0.65 (0.00) \\
    \texttt{RS-B (50)} & 2.55 (0.23) & 6.98 (0.13) & 3.73 (0.06) & 0.64 (0.01) & 4.08 (0.03) & 0.62 (0.00) \\
    \texttt{RS-B (100)} & 2.33 (0.20) & 7.02 (0.15) & 3.76 (0.07) & 0.64 (0.01) & 4.07 (0.03) & 0.62 (0.00) \\
    \texttt{RS-B (500)} & 2.06 (0.13) & 6.9 (0.16) & 3.73 (0.07) & 0.63 (0.01) & 4.08 (0.03) & 0.62 (0.00) \\
    \texttt{RS-E (50)} & 2.00 (0.06) & 6.28 (0.10) & \textbf{3.73 (0.06)} & 0.61 (0.01) & 4.07 (0.03) & 0.60 (0.00) \\
    \texttt{RS-E (100)} & 2.04 (0.07) & 6.24 (0.10) & 3.78 (0.06) & 0.61 (0.00) & 4.06 (0.03) & 0.60 (0.00) \\
    \texttt{RS-E (500)} & \textbf{1.87 (0.06)} & \textbf{6.24 (0.08)} & 3.79 (0.07) & \textbf{0.61 (0.01)} & \textbf{4.06 (0.03)} & \textbf{0.60 (0.00)} \\
    \bottomrule
  \end{tabular*}
\end{table*}

In \cref{tab:RMSE}, we show the root mean squared error (RMSE) on test data for mean regression.
Similarly to the density estimation experiment, random search methods outperform the benchmarks in most cases, with performance improving with $M$.
Again, \texttt{RS-E (500)} is the best method in five out of six datasets, with the RMSE of \texttt{RS-E (M)} being consistently lower than that of \texttt{RS-B (M)} for almost all $M$ and datasets.
The exact same pattern is observed for median regression when evaluated via mean absolute error (MAE); see \cref{sec:additional_experiments} for details.
This suggests that ensembling via the MCS is particularly beneficial in regression-type problems, where the model selection is not directly optimized for the evaluation metric.

\subsection{Probabilistic Forecasting}

In this experiment, we compare the same methods as in the regression experiment.
However, we evaluate the quality of the full predictive distribution via the continuous ranked probability score (CRPS) \citep{gneiting2014probabilistic}, a popular proper scoring rule for probabilistic forecasting.
Specifically, we use the fact that the methods from the previous experiment yield quantiles at all levels, and compute the $\text{CRPS}(F, y) = 2 \int_0^1 \rho_{\tau}(y - F^{-1}(\tau)) d\tau$,
with $\rho_\tau(u) = u(\tau - \ind\{u < 0\})$ the check function.
In \cref{tab:CRPS}, we see that outperformance of our random search methods over the benchmarks is even stronger in this experiment, and ensembles via the MCS are particularly effective.

\begin{table*}[th]\small
  \vspace{-0.3cm}
  \caption{Rounded test CRPS (standard error) for probabilistic forecasting. Best results in bold.}
  \label{tab:CRPS}
  \begin{tabular*}{\linewidth}{@{\extracolsep{\fill}}lllllll}
    \toprule
    & Energy & Concrete & Airfoil & Wine & Ccpp & California \\
    \midrule\addlinespace[2.5pt]
    \texttt{Dissmann} & 1.36 (0.03) & 3.91 (0.07) & 2.38 (0.04) & 0.31 (0.00) & 2.25 (0.01) & 0.33 (0.00) \\
    \texttt{Kraus} & 1.41 (0.05) & 3.94 (0.06) & 2.37 (0.04) & 0.31 (0.00) & 2.38 (0.01) & 0.33 (0.00) \\
    \texttt{RS-B (50)} & 1.14 (0.07) & 3.81 (0.08) & 1.95 (0.02) & 0.31 (0.00) & 2.24 (0.01) & 0.31 (0.00) \\
    \texttt{RS-B (100)} & 1.12 (0.10) & 3.80 (0.07) & 1.97 (0.03) & 0.31 (0.00) & 2.23 (0.01) & 0.31 (0.00) \\
    \texttt{RS-B (500)} & 0.97 (0.05) & 3.78 (0.08) & 1.94 (0.03) & 0.30 (0.00) & 2.24 (0.01) & 0.30 (0.00) \\
    \texttt{RS-E (50)} & 0.93 (0.02) & 3.39 (0.05) & \textbf{1.94 (0.03)} & 0.29 (0.00) & 2.24 (0.01) & 0.29 (0.00) \\
    \texttt{RS-E (100)} & 0.95 (0.02) & 3.37 (0.05) & 1.97 (0.03) & 0.29 (0.00) & 2.24 (0.01) & 0.29 (0.00) \\
    \texttt{RS-E (500)} & \textbf{0.89 (0.02)} & \textbf{3.37 (0.04)} & 1.97 (0.03) & \textbf{0.29 (0.00)} & \textbf{2.23 (0.01)} & \textbf{0.29 (0.00)} \\
    \bottomrule
  \end{tabular*}
\vspace{-0.3cm}
\end{table*}

\subsection{Runtime}

\begin{wrapfigure}{r}{0.54\textwidth}
  \vspace{-0.6cm}
  \centering
  \includegraphics[width=\linewidth]{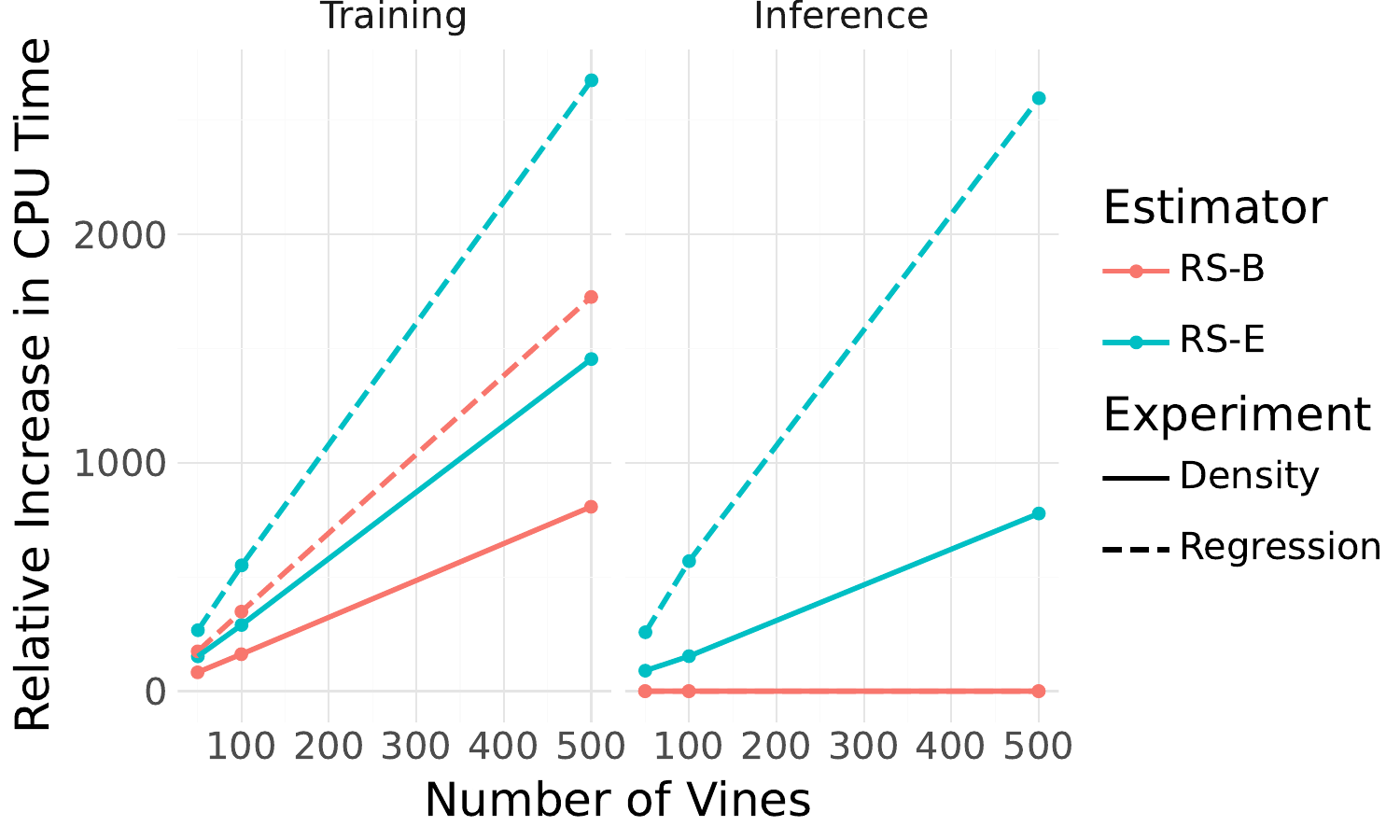}
  \caption{Relative CPU time of random search methods compared to \texttt{Dissmann} on the \texttt{Concrete} data set.}
  \label{fig:runtime}
  \vspace{-0.5cm}
\end{wrapfigure}

In \cref{fig:runtime}, we show the relative CPU time of our random search methods compared to \texttt{Dissmann} on the \texttt{Concrete} data set.
As expected, the CPU time grows roughly linearly with $M$ in training, irrespective of the random search method or the task.
On the other hand, there is no increase in inference time for \texttt{RS-B}, since only one model is used for prediction.
For \texttt{RS-E}, the inference time grows with $M$, since some fraction of the candidates are retained in the MCS and used for prediction.
    
While the computational cost of the random search methods is higher than that of \texttt{Dissmann}, the total cost is still tiny for real datasets of this scale. For instance, training and inference average respectively 0.357 and 0.006 seconds on a single CPU core for the \texttt{Concrete} data set. Additionally, the random search methods are easily parallelizable, since candidates can be fitted independently.
See \cref{sec:additional_experiments} for additional runtime results on other data sets.

\section{Conclusion} \label{sec:conclusion}

We propose a simple and yet effective method for learning vine copula structures based on random search and model confidence sets. Our experiments on real-world data sets demonstrate that the new method outperforms existing greedy heuristics in both density estimation and regression tasks, often by a large margin.

\paragraph{Limitations} The main limitation of the method is the additional computational cost compared to greedy heuristics. However, this cost is modest if the dimension is not too high and easily parallelizable, making it a worthwhile trade-off for the significant performance gains achieved. Nonetheless, in truly high-dimensional problems, any structure learning method must be combined with additional sparsity-inducing mechanisms, such as truncation, regularization, or variable selection.
While beyond the scope of the current work, designing more sophisticated sampling methods that incorporate such mechanisms is an important and promising direction for future research.

A second limitation is that our theoretical guarantees are stated for a fixed
number of candidates $M$ and dimensions $d$. Allowing both to grow with $n$ should be possible using dimension-agnostic discrete argmin inference arguments similar to those in \cite{kim2025locally}. Since the focus of this paper is methodological
and empirical rather than theoretical, and the considered regime already covers the main range of current vine copula
applications, we leave this
extension to future work.

\begin{ack}
  We thank Johanna F. Ziegel for pointing us to the literature on model confidence sets.
  This work was partially supported by the Hasler Stiftung under grant number 2025-05-01-519.
\end{ack}

\bibliography{bibliography}

@misc{boost,
  author       = {{Boost C++ Libraries}},
  title        = {Boost C++ Libraries},
  year         = {2025},
  howpublished = {\url{https://www.boost.org/}},
  note         = {Version 1.84.0}
}

@article{nagler2017nonparametric,
  title     = {Nonparametric estimation of simplified vine copula models: comparison of methods},
  author    = {Nagler, Thomas and Schellhase, Christian and Czado, Claudia},
  journal   = {Dependence Modeling},
  volume    = {5},
  number    = {1},
  pages     = {99--120},
  year      = {2017},
  publisher = {De Gruyter Open}
}

@article{joe1996estimation,
  title   = {The estimation method of inference functions for margins for multivariate models},
  author  = {Joe, Harry and Xu, James Jianmeng},
  journal = {Technical Report No.~166, Department of Statistics, University of British Columbia},
  year    = {1996}
}

@article{nagler2019model,
  title     = {Model selection in sparse high-dimensional vine copula models with an application to portfolio risk},
  author    = {Nagler, Thomas and Bumann, Christian and Czado, Claudia},
  journal   = {Journal of Multivariate Analysis},
  volume    = {172},
  pages     = {180--192},
  year      = {2019},
  publisher = {Elsevier},
  highlight = {true}
}

@article{genest1995semiparametric,
  title     = {A semiparametric estimation procedure of dependence parameters in multivariate families of distributions},
  author    = {Genest, Christian and Ghoudi, Kilani and Rivest, L-P},
  journal   = {Biometrika},
  volume    = {82},
  number    = {3},
  pages     = {543--552},
  year      = {1995},
  publisher = {Oxford University Press}
}

@article{gruber2015sequential,
  author    = {Gruber, Lutz F and Czado, Claudia},
  title     = {{Sequential Bayesian model selection of regular vine copulas}},
  journal   = {Bayesian Analysis},
  year      = {2015},
  volume    = {10},
  number    = {4},
  pages     = {937--963},
  issn      = {1936-0975},
  publisher = {International Society for Bayesian Analysis}
}

@misc{R,
  author      = {{R Development Core Team}},
  title       = {{R: A Language and Environment for Statistical Computing}},
  year        = {2017},
  abstract    = {R Foundation for Statistical Computing, Vienna, Austria. ISBN 3-900051-07-0, URL http://www.R-project.org/.},
  address     = {Vienna, Austria},
  booktitle   = {R Foundation for Statistical Computing},
  doi         = {10.1007/978-3-540-74686-7},
  institution = {R Foundation for Statistical Computing},
  isbn        = {3900051070},
  issn        = {16000706},
  keywords    = {Bayes Theorem,Diabetic Retinopathy,Humans,Models,Physical Fitness,Statistical,__________________________________________________,adverse selection,and phrases,annuity pricing,archimedean copula,archimedean copulas,asymptotic full dependence,bayesian estimation,bivariate distributions,cdf,co monotonic,compounded returns,conditional tail expectation,copula,copula function,copulas,dependence modeling,directional dependence,exponentially tilted stable distribution,function,goodness of fit,grade,gumbel,gumbel-hougaard,hastings algorithm,hougaard,joint loss distributions,joint survival function,kendall,laplace transform,linear returns,marginal,metropolis,metropolis-hastings,multivariate analysis,multivariate independence,nested archimedean copulas,pdf,pseudo-observations,quantile,r,rank-based tests,regression function,regular variation,s,s theorem,sampling algorithms,scenarios probabilities,serial independence,sklar,stable,survival analysis,survival analysis joint survival,systemic risk,tail risk,tail-dependence coefficients,tau,tribution,unit cube,weibull dis-,weibull distribution},
  number      = {2.11.1},
  pages       = {409},
  pmid        = {16868839},
  url         = {https://www.r-project.org/ http://poj.peeters-leuven.be/content.php?url=article&id=2024077%5Cnhttp://cran.cermin.lipi.go.id/web/packages/nacopula/vignettes/nacopula-pkg.pdf%5Cnpapers2://publication/uuid/E2A9279B-85B4-428B-9841-F7CA4DE06D13%5Cnhttp://link.},
  volume      = {1}
}

@article{tekumalla2017vine,
  author    = {Tekumalla, Lavanya Sita and Rajan, Vaibhav and Bhattacharyya, Chiranjib},
  journal   = {Machine Learning},
  number    = {9},
  pages     = {1331--1357},
  publisher = {Springer},
  title     = {{Vine copulas for mixed data: multi-view clustering for mixed data beyond meta-{G}aussian dependencies}},
  volume    = {106},
  year      = {2017}
}

@article{sklar1959,
  author  = {Sklar, Abe},
  journal = {Publications de L'Institut de Statistique de L'Universit{\'{e}} de Paris},
  pages   = {229--231},
  title   = {{Fonctions de r{\'{e}}partition {\`{a}} n dimensions et leurs marges}},
  volume  = {8},
  year    = {1959}
}

@manual{vinereg,
  title  = {{vinereg: {D}-Vine Quantile Regression}},
  author = {Nagler, Thomas and Kraus, Daniel},
  year   = {2017},
  url    = {https://github.com/tnagler/vinereg}
}

@article{Black,
  author  = {Black, Fischer and Derman, E and Toy, W},
  journal = {Financial Analysts Journal},
  pages   = {33--39},
  title   = {{A one factor model of interest rates and its applications to treasury bond options}},
  volume  = {46},
  year    = {1990}
}

@article{joe2011regular,
  author    = {Joe, Harry and Cooke, Roger M. and Kurowicka, Dorota},
  journal   = {Dependence Modeling: Vine Copula Handbook},
  pages     = {219--231},
  publisher = {World Scientific},
  title     = {{Regular vines: generation algorithm and number of equivalence classes}},
  year      = {2011}
}

@manual{pyvinecopulib,
  author = {Nagler, Thomas and Vatter, Thibault},
  title  = {pyvinecopulib: High Performance Algorithms for Vine Copula Modeling},
  year   = {2025},
  note   = {Python package version 0.7.1, DOI: 10.5281/zenodo.10435751},
}

@inproceedings{kulkarni2018nonparametric,
  author    = {Kulkarni, V. and Vatter, Thibault and Tagasovska, Natasa and Garbinato, B. and Vatter, Thibault and Garbinato, B.},
  title     = {{Generative Models for Simulating Mobility Trajectories}},
  booktitle = {NeurIPS 2018 Workshop on Control and decision making in Spatiotemporal domain},
  year      = {2018},
  abstract  = {Copyright {\textcopyright} 2018, arXiv, All rights reserved. Mobility datasets are fundamental for evaluating algorithms pertaining to geographic information systems and facilitating experimental reproducibility. But privacy implications restrict sharing such datasets, as even aggregated location-data is vulnerable to membership inference attacks. Current synthetic mobility dataset generators attempt to superficially match a priori modeled mobility characteristics which do not accurately reflect the real-world characteristics. Modeling human mobility to generate synthetic yet semantically and statistically realistic trajectories is therefore crucial for publishing trajectory datasets having satisfactory utility level while preserving user privacy. Specifically, long-range dependencies inherent to human mobility are challenging to capture with both discriminative and generative models. In this paper, we benchmark the performance of recurrent neural architectures (RNNs), generative adversarial networks (GANs) and nonparametric copulas to generate synthetic mobility traces. We evaluate the generated trajectories with respect to their geographic and semantic similarity, circadian rhythms, long-range dependencies, training and generation time. We also include two sample tests to assess statistical similarity between the observed and simulated distributions, and we analyze the privacy tradeoffs with respect to membership inference and location-sequence attacks.}
}

@incollection{joe1996,
  author    = {Joe, Harry},
  title     = {{Families of $m$-variate distributions with given margins and $m(m-1)/2$ bivariate dependence parameters}},
  booktitle = {Distributions with fixed marginals and related topics},
  publisher = {Institute of Mathematical Statistics Lecture Notes - Monograph Series},
  year      = {1996},
  editor    = {R{\"{u}}schendorf, Ludger and Schweizer, Berthold and Taylor, Michael D.},
  pages     = {120--141},
  isbn      = {0-940600-40-4},
  abstract  = {A class of m-variate distributions with given margins and m(m-1)/2 dependence parameters, which is based on iteratively mixing conditional distributions, is derived. The family of multivariate normal distributions is a special case. The motivation for the class is to get parametric families that have m(m-1)/2 dependence parameters and properties that the family of multivariate normal distributions does not have. Properties of the class are studied, with details for (i) conditions for bivariate tail dependence and non-trivial limiting multivariate extreme value distributions and (ii) range of dependence for a bivariate measure of association such as Kendall's tau.},
  doi       = {10.1214/lnms/1215452614},
  issn      = {07492170},
  language  = {EN},
}

@misc{Dua:2019,
  author      = {Dua, Dheeru and Graff, Casey},
  year        = {2017},
  title       = {UCI Machine Learning Repository},
  url         = {http://archive.ics.uci.edu/ml},
  institution = {University of California, Irvine, School of Information and Computer Sciences}
}

@article{Lopez-Paz2016,
  author        = {Lopez-Paz, David},
  title         = {{From Dependence to Causation}},
  month         = jul,
  year          = {2016},
  abstract      = {Machine learning is the science of discovering statistical dependencies in data, and the use of those dependencies to perform predictions. During the last decade, machine learning has made spectacular progress, surpassing human performance in complex tasks such as object recognition, car driving, and computer gaming. However, the central role of prediction in machine learning avoids progress towards general-purpose artificial intelligence. As one way forward, we argue that causal inference is a fundamental component of human intelligence, yet ignored by learning algorithms. Causal inference is the problem of uncovering the cause-effect relationships between the variables of a data generating system. Causal structures provide understanding about how these systems behave under changing, unseen environments. In turn, knowledge about these causal dynamics allows to answer "what if" questions, describing the potential responses of the system under hypothetical manipulations and interventions. Thus, understanding cause and effect is one step from machine learning towards machine reasoning and machine intelligence. But, currently available causal inference algorithms operate in specific regimes, and rely on assumptions that are difficult to verify in practice. This thesis advances the art of causal inference in three different ways. First, we develop a framework for the study of statistical dependence based on copulas and random features. Second, we build on this framework to interpret the problem of causal inference as the task of distribution classification, yielding a family of novel causal inference algorithms. Third, we discover causal structures in convolutional neural network features using our algorithms. The algorithms presented in this thesis are scalable, exhibit strong theoretical guarantees, and achieve state-of-the-art performance in a variety of real-world benchmarks.},
  archiveprefix = {arXiv},
  journal = {arXiv preprint, arXiv:1607.03300},
  arxivid       = {1607.03300},
  eprint        = {1607.03300},
}

@article{muller2019selection,
  author        = {M{\"{u}}ller, Dominik and Czado, Claudia},
  title         = {{Selection of Sparse Vine Copulas in High Dimensions with the Lasso}},
  journal       = {Statistics and Computing},
  year          = {2019},
  volume        = {49},
  number        = {2},
  pages         = {269--287},
  month         = may,
  abstract      = {We propose a novel structure selection method for high dimensional (d > 100) sparse vine copulas. Current sequential greedy approaches for structure selection require calculating spanning trees in hundreds of dimensions and fitting the pair copulas and their parameters iteratively throughout the structure selection process. Our method uses a connection between the vine and structural equation models (SEMs). The later can be estimated very fast using the Lasso, also in very high dimensions, to obtain sparse models. Thus, we obtain a structure estimate independently of the chosen pair copulas and parameters. Additionally, we define the novel concept of regularization paths for R-vine matrices. It relates sparsity of the vine copula model in terms of independence copulas to a penalization coefficient in the structural equation models. We illustrate our approach and provide many numerical examples. These include simulations and data applications in high dimensions, showing the superiority of our approach to other existing methods.},
  archiveprefix = {arXiv},
  arxivid       = {1705.05877},
  eprint        = {1705.05877},
  publisher     = {Springer},
}

@article{Lopez-Paz2013,
  author        = {Lopez-Paz, David and Hernandez-Lobato, Jos{\'{e}} Miguel and Sch{\"{o}}lkopf, Bernhard},
  title         = {{Semi-Supervised Domain Adaptation with Copulas}},
  journal       = {Advances in Neural Information Processing Systems 25},
  year          = {2013},
  pages         = {674--682},
  abstract      = {A new framework based on the theory of copulas is proposed to address semi-supervised domain adaptation problems. The presented method factorizes any multivariate density into a product of marginal distributions and bivariate cop-ula functions. Therefore, changes in each of these factors can be detected and corrected to adapt a density model accross different learning domains. Impor-tantly, we introduce a novel vine copula model, which allows for this factorization in a non-parametric manner. Experimental results on regression problems with real-world data illustrate the efficacy of the proposed approach when compared to state-of-the-art techniques.},
  archiveprefix = {arXiv},
  arxivid       = {1301.0142},
  eprint        = {1301.0142},
}

@book{joe2014,
  author    = {Joe, Harry},
  isbn      = {1466583223},
  publisher = {CRC Press},
  title     = {{Dependence modeling with copulas}},
  year      = {2014}
}

@article{Schallhorn2017,
  author  = {Schallhorn, Niklas and Kraus, Daniel and Nagler, Thomas and Czado, Claudia},
  title   = {{D-vine quantile regression with discrete variables}},
  journal = {arXiv preprint arXiv:1705.08310},
  year    = {2017}
}

@article{cooke2019vine,
  author    = {Cooke, Roger M. and Joe, Harry and Chang, Bo},
  journal   = {AStA Advances in Statistical Analysis},
  pages     = {1--27},
  publisher = {Springer},
  title     = {{Vine copula regression for observational studies}},
  year      = {2019}
}

@inproceedings{botied,
  title     = {{BO}tied: Multi-objective {B}ayesian optimization with tied multivariate ranks},
  author    = {Park, Ji Won and Tagasovska, Natasa and Maser, Michael and Ra, Stephen and Cho, Kyunghyun},
  booktitle = {Proceedings of the 41st International Conference on Machine Learning},
  pages     = {39813--39833},
  year      = {2024},
  volume    = {235},
  month     = {21--27 Jul},
  publisher = {PMLR},
  abstract  = {Many scientific and industrial applications require the joint optimization of multiple, potentially competing objectives. Multi-objective Bayesian optimization (MOBO) is a sample-efficient framework for identifying Pareto-optimal solutions. At the heart of MOBO is the acquisition function, which determines the next candidate to evaluate by navigating the best compromises among the objectives. Acquisition functions that rely on integrating over the objective space scale poorly to a large number of objectives. In this paper, we show a natural connection between the non-dominated solutions and the highest multivariate rank, which coincides with the extreme level line of the joint cumulative distribution function (CDF). Motivated by this link, we propose the CDF indicator, a Pareto-compliant metric for evaluating the quality of approximate Pareto sets, that can complement the popular hypervolume indicator. We then introduce an acquisition function based on the CDF indicator, called BOtied. BOtied can be implemented efficiently with copulas, a statistical tool for modeling complex, high-dimensional distributions. Our experiments on a variety of synthetic and real-world experiments demonstrate that BOtied outperforms state-of-the-art MOBO algorithms while being computationally efficient for many objectives.}
}

@article{Czado2022,
  author    = {Czado, Claudia and Nagler, Thomas},
  issn      = {2326-8298},
  journal   = {Annual Review of Statistics and Its Application},
  number    = {1},
  pages     = {453--477},
  publisher = {Annual Reviews},
  title     = {{Vine copula based modeling}},
  volume    = {9},
  year      = {2022},
  highlight = {true}
}

@article{joe2018parsimonious,
  author   = {Joe, Harry},
  title    = {{Parsimonious graphical dependence models constructed from vines}},
  journal  = {Canadian Journal of Statistics},
  year     = {2018},
  volume   = {46},
  number   = {4},
  pages    = {532--555},
  doi      = {10.1002/cjs.11481},
  keywords = {Bayesian network,Markov tree,copula,factor model,inverse correlation matrix,latent variables,mixture of conditional distributions,parsimonious dependence}
}

@article{BrechmannCzadoAas2012,
  author  = {Brechmann, Eike Christian and Czado, Claudia and Aas, Kjersti},
  journal = {Canadian Journal of Statistics},
  number  = {1},
  pages   = {68--85},
  title   = {{Truncated regular vines and their applications}},
  volume  = {40},
  year    = {2012}
}

@incollection{morales2011count,
  author    = {Morales-N{\'{a}}poles, Oswaldo},
  booktitle = {Dependence Modeling: Vine Copula Handbook},
  chapter   = {9},
  editor    = {Kurowicka, D and Joe, H},
  pages     = {189--218},
  publisher = {Singapore, SG: World Scientific},
  title     = {{Counting vines}},
  year      = {2011}
}

@article{Kraus2017,
  title   = {Growing simplified vine copula trees: improving {D}i{\ss}mann's algorithm},
  author  = {Daniel Kraus and Claudia Czado},
  year    = {2017},
  journal = {	arXiv:1703.05203},
}

@article{Dissmann2013,
  author   = {Dissmann, J. and Brechmann, Eike Christian and Czado, Claudia and Kurowicka, Dorota},
  title    = {{Selecting and estimating regular vine copulae and application to financial returns}},
  journal  = {Computational Statistics \& Data Analysis},
  year     = {2013},
  volume   = {59},
  pages    = {52--69},
  month    = mar,
  issn     = {01679473},
  abstract = {Regular vine distributions which constitute a flexible class of multivariate dependence models are discussed. Since multivariate copulae constructed through pair-copula decompositions were introduced to the statistical community, interest in these models has been growing steadily and they are finding successful applications in various fields. Research so far has however been concentrating on so-called canonical and D-vine copulae, which are more restrictive cases of regular vine copulae. It is shown how to evaluate the density of arbitrary regular vine specifications. This opens the vine copula methodology to the flexible modeling of complex dependencies even in larger dimensions. In this regard, a new automated model selection and estimation technique based on graph theoretical considerations is presented. This comprehensive search strategy is evaluated in a large simulation study and applied to a 16-dimensional financial data set of international equity, fixed income and commodity indices which were observed over the last decade, in particular during the recent financial crisis. The analysis provides economically well interpretable results and interesting insights into the dependence structure among these indices.},
  doi      = {10.1016/j.csda.2012.08.010},
  keywords = {Minimum spanning tree,Model selection,Multivariate copula,Regular vines},
}

@article{bedfordcooke2001,
  author    = {Bedford, Tim and Cooke, Roger M.},
  title     = {{Probability Density Decomposition for Conditionally Dependent Random Variables Modeled by Vines}},
  journal   = {Annals of Mathematics and Artificial Intelligence},
  year      = {2001},
  volume    = {32},
  number    = {1-4},
  pages     = {245--268},
  month     = aug,
  issn      = {1573-7470},
  doi       = {10.1023/A:1016725902970},
  language  = {en},
  publisher = {Kluwer Academic Publishers}
}

@article{zhu2020common,
  author    = {Zhu, Kailun and Kurowicka, Dorota and Nane, Gabriela F},
  journal   = {Computational Statistics \& Data Analysis},
  pages     = {106811},
  publisher = {Elsevier},
  title     = {{Common sampling orders of regular vines with application to model selection}},
  volume    = {142},
  year      = {2020}
}

@article{Tran2015,
  author        = {Tran, Dustin and Blei, David M. and Airoldi, Edoardo M.},
  title         = {{Copula variational inference}},
  month         = jun,
  year          = {2015},
  abstract      = {We develop a general variational inference method that preserves dependency among the latent variables. Our method uses copulas to augment the families of distributions used in mean-field and structured approximations. Copulas model the dependency that is not captured by the original variational distribution, and thus the augmented variational family guarantees better approximations to the posterior. With stochastic optimization, inference on the augmented distribution is scalable. Furthermore, our strategy is generic: it can be applied to any inference procedure that currently uses the mean-field or structured approach. Copula variational inference has many advantages: it reduces bias; it is less sensitive to local optima; it is less sensitive to hyperparameters; and it helps characterize and interpret the dependency among the latent variables.},
  archiveprefix = {arXiv},
  arxivid       = {1506.03159},
  journal       = {Advances in Neural Information Processing Systems (NeurIPS)},
  eprint        = {1506.03159},
  issn          = {10495258},
}

@manual{kdevine,
  title  = {{{kdevine}: Multivariate Kernel Density Estimation with Vine Copulas}},
  author = {Nagler, Thomas},
  year   = {2017},
  url    = {https://tnagler.github.io/kdevine}
}

@Manual{vinecopula2025,
    title = {VineCopula: Statistical Inference of Vine Copulas},
    author = {Thomas Nagler and Ulf Schepsmeier and Jakob Stoeber and Eike Christian Brechmann and Benedikt Graeler and Tobias Erhardt},
    year = {2025},
    note = {R package version 2.6.1},
    url = {https://CRAN.R-project.org/package=VineCopula},
    doi = {10.32614/CRAN.package.VineCopula},
  }

@article{muller2019dependence,
  author    = {M{\"{u}}ller, Dominik and Czado, Claudia},
  title     = {{Dependence modelling in ultra high dimensions with vine copulas and the Graphical Lasso}},
  journal   = {Computational Statistics \& Data Analysis},
  year      = {2019},
  volume    = {137},
  pages     = {211--232},
  publisher = {Elsevier}
}

@article{sun2019,
  author  = {Sun, Yi and Cuesta-Infante, Alfredo and Veeramachaneni, Kalyan},
  doi     = {10.1609/aaai.v33i01.33015049},
  journal = {Proceedings of the AAAI Conference on Artificial Intelligence},
  number  = {01},
  pages   = {5049--5057},
  title   = {{Learning Vine Copula Models for Synthetic Data Generation}},
  volume  = {33},
  year    = {2019}
}

@article{lei2020cross,
  title={Cross-validation with confidence},
  author={Lei, Jing},
  journal={Journal of the American Statistical Association},
  volume={115},
  number={532},
  pages={1978--1997},
  year={2020},
  publisher={Taylor \& Francis}
}

@article{kissel2022black,
  title={Black-Box Model Confidence Sets Using Cross-Validation with High-Dimensional Gaussian Comparison},
  author={Kissel, Nicholas and Lei, Jing},
  journal={arXiv preprint arXiv:2211.04958},
  year={2022}
}

@inproceedings{wilson1996generating,
  title={Generating random spanning trees more quickly than the cover time},
  author={Wilson, David Bruce},
  booktitle={Proceedings of the twenty-eighth annual ACM symposium on Theory of computing},
  pages={296--303},
  year={1996}
}

@inproceedings{chang19a,
  author    = {Chang, Bo and Pan, Shenyi and Joe, Harry},
  booktitle = {Proceedings of Machine Learning Research},
  editor    = {Chaudhuri, Kamalika and Sugiyama, Masashi},
  pages     = {353--361},
  publisher = {PMLR},
  series    = {Proceedings of Machine Learning Research},
  title     = {{Vine copula structure learning via {M}onte {C}arlo tree search}},
  volume    = {89},
  year      = {2019}
}

@article{gruber2018,
  author  = {Gruber, Lutz F and Czado, Claudia},
  journal = {Bayesian Analysis},
  number  = {4},
  pages   = {1111--1135},
  title   = {{Bayesian model selection of regular vine copulas}},
  volume  = {13},
  year    = {2018}
}

@article{Bedford2002,
  author    = {Bedford, Tim and Cooke, Roger M.},
  title     = {{Vines--a new graphical model for dependent random variables}},
  journal   = {The Annals of Statistics},
  year      = {2002},
  volume    = {30},
  number    = {4},
  pages     = {1031--1068},
  month     = aug,
  keywords  = {Correlation,Markov tree,Monte Carlo simulation,belief net,dependence,information,multivariate normal distribution,multivariate probablility distribution,tree dependence},
  publisher = {Institute of Mathematical Statistics}
}

@article{panagiotelis2012,
  author    = {Panagiotelis, Anastasios and Czado, Claudia and Joe, Harry},
  title     = {{Pair Copula Constructions for Multivariate Discrete Data}},
  journal   = {Journal of the American Statistical Association},
  year      = {2012},
  volume    = {107},
  number    = {499},
  pages     = {1063--1072},
  month     = sep,
  issn      = {01621459},
  abstract  = {Multivariate discrete response data can be found in diverse fields, including econometrics, finance, biometrics, and psychometrics. Our contribution, through this study, is to introduce a new class of models for multivariate discrete data based on pair copula constructions (PCCs) that has two major advantages. First, by deriving the conditions under which any multivariate discrete distribution can be decomposed as a PCC, we show that discrete PCCs attain highly flexible dependence structures. Second, the computational burden of evaluating the likelihood for an m-dimensional discrete PCC only grows quadratically with in. This compares favorably to existing models for which computing the likelihood either requires the evaluation of 2(m) terms or slow numerical integration methods. We demonstrate the high quality of inference function for margins and maximum likelihood estimates, both under a simulated setting and for an application to a longitudinal discrete dataset on headache severity. This article has online supplementary material.},
  doi       = {10.1080/01621459.2012.682850},
  isbn      = {0162-1459},
  keywords  = {D-vine,Inference function for margins,Longitudinal data,Model selection,Ordered probit regression},
  publisher = {Taylor \& Francis Group}
}

@article{aasczadofrigessibakken2009,
  author      = {Aas, Kjersti and Czado, Claudia and Frigessi, Arnoldo and Bakken, Henrik},
  title       = {{Pair-copula constructions of multiple dependence}},
  journal     = {Insurance: Mathematics and Economics},
  year        = {2009},
  volume      = {44},
  number      = {2},
  pages       = {182--198},
  month       = apr,
  issn        = {01676687},
  abstract    = {Building on the work of Bedford, Cooke and Joe, we show how multivariate data, which exhibit complex patterns of dependence in the tails, can be modelled using a cascade of pair-copulae, acting on two variables at a time. We use the pair-copula decomposition of a general multivariate distribution and propose a method for performing inference. The model construction is hierarchical in nature, the various levels corresponding to the incorporation of more variables in the conditioning sets, using pair-copulae as simple building blocks. Pair-copula decomposed models also represent a very flexible way to construct higher-dimensional copulae. We apply the methodology to a financial data set. Our approach represents the first step towards the development of an unsupervised algorithm that explores the space of possible pair-copula models, that also can be applied to huge data sets automatically. {\textcopyright} 2007 Elsevier B.V. All rights reserved.},
  doi         = {10.1016/j.insmatheco.2007.02.001},
  institution = {SFB 386 Statistical Analysis of Diskrete Structures, www.stat.uni-muenchen.de/sfb386/},
  isbn        = {0167-6687},
  keywords    = {Conditional distribution,Decomposition,Multivariate distribution,Pair-copulae,Vines},
  pmid        = {49232718},
  publisher   = {Elsevier},
}

@article{chang2019prediction,
  author    = {Chang, Bo and Joe, Harry},
  journal   = {Computational Statistics \& Data Analysis},
  pages     = {45--63},
  publisher = {Elsevier},
  title     = {{Prediction based on conditional distributions of vine copulas}},
  volume    = {139},
  year      = {2019}
}

@article{zhu2021simplified,
  author    = {Zhu, Kailun and Kurowicka, Dorota and Nane, Gabriela F},
  journal   = {Computational Statistics \& Data Analysis},
  pages     = {107091},
  publisher = {Elsevier},
  title     = {{Simplified {R}-vine based forward regression}},
  volume    = {155},
  year      = {2021}
}

@article{gneiting2014probabilistic,
  title     = {Probabilistic forecasting},
  author    = {Gneiting, Tilmann and Katzfuss, Matthias},
  journal   = {Annual Review of Statistics and Its Application},
  volume    = {1},
  number    = {1},
  pages     = {125--151},
  year      = {2014},
  publisher = {Annual Reviews}
}

@article{stone1974cross,
  title     = {Cross-validatory choice and assessment of statistical predictions},
  author    = {Stone, Mervyn},
  journal   = {Journal of the royal statistical society: Series B (Methodological)},
  volume    = {36},
  number    = {2},
  pages     = {111--133},
  year      = {1974},
  publisher = {Wiley Online Library}
}

@book{nelsen2007introduction,
  author    = {Nelsen, Roger B.},
  publisher = {Springer Science \& Business Media},
  title     = {{An Introduction to Copulas}},
  year      = {2007}
}

@article{Nagler2018,
  abstract      = {Copyright {\textcopyright} 2018, arXiv, All rights reserved. Thanks to their ability to capture complex dependence structures, copulas are frequently used to glue random variables into a joint model with arbitrary one-dimensional margins. More recently, they have been applied to solve statistical learning problems such as regression or classification. Framing such approaches as solutions of estimating equations, we generalize them in a unified framework. We derive consistency, asymptotic normality, and validity of the bootstrap for copula-based Z-estimators. We further illustrate the versatility of such estimators through theoretical and simulated examples.},
  archiveprefix = {arXiv},
  arxivid       = {1801.10576},
  author        = {Nagler, Thomas and Vatter, Thibault},
  eprint        = {1801.10576},
  journal       = {Journal of the Americal Statistical Association},
  keywords      = {Copula,Estimating functions,Statistical learning,Z-estimators},
  number        = {546},
  pages         = {1168--1180},
  title         = {{Solving Estimating Equations With Copulas}},
  volume        = {119},
  year          = {2024},
  highlight     = {true}
}

@inproceedings{Tagasovska2018b,
  abstract  = {Copyright {\textcopyright} 2019, arXiv, All rights reserved. We propose a vine copula autoencoder to construct flexible generative models for high-dimensional distributions in a straightforward three-step procedure. First, an autoencoder compresses the data using a lower dimensional representation. Second, the multivariate distribution of the encoded data is estimated with vine copulas. Third, a generative model is obtained by combining the estimated distribution with the decoder part of the autoencoder. This approach can transform any already trained autoencoder into a flexible generative model at a low computational cost. This is an advantage over existing generative models such as adversarial networks and variational autoencoders which can be difficult to train and can impose strong assumptions on the latent space. Experiments on MNIST, Street View House Numbers and Large-Scale CelebFaces Attributes datasets show that vine copulas autoencoders can achieve competitive results to standard baselines.},
  author    = {{Tagasovska, Ackerer,} and Vatter},
  booktitle = {Advances in Neural Information Processing Systems 32},
  title     = {{Copulas as high-dimensional generative models: vine copula autoencoders}},
  year      = {2019},
  highlight = {true}
}

@inproceedings{park2025semiparametric,
  title     = {Semiparametric conformal prediction},
  author    = {Ji Won Park and Tibshirani Robert and Kyunghyun Cho},
  booktitle = {The 28th International Conference on Artificial Intelligence and Statistics},
  year      = {2025},
}

@article{Konstantelos2018,
  author    = {Konstantelos, Ioannis and Sun, Mingyang and Tindemans, Simon H and Issad, Samir and Panciatici, Patrick and Strbac, Goran},
  issn      = {0885-8950},
  journal   = {IEEE Transactions on Power Systems},
  number    = {1},
  pages     = {225--235},
  publisher = {IEEE},
  title     = {{Using vine copulas to generate representative system states for machine learning}},
  volume    = {34},
  year      = {2018}
}

@article{funk2025towards,
  title     = {Towards more realistic climate model outputs: A multivariate bias correction based on zero-inflated vine copulas},
  author    = {Funk, Henri and Ludwig, Ralf and K{\"u}chenhoff, Helmut and Nagler, Thomas},
  journal   = {Journal of the Royal Statistical Society Series C: Applied Statistics},
  pages     = {qlaf044},
  year      = {2025},
  publisher = {Oxford University Press UK}
}

@article{nagler2025simplified,
  title     = {Simplified vine copula models: state of science and affairs},
  author    = {Nagler, Thomas},
  journal   = {Risk Sciences},
  pages     = {100022},
  year      = {2025},
  publisher = {Elsevier}
}

@article{spanhel2019simplified,
  title   = {Simplified vine copula models: Approximations based on the simplifying assumption},
  author  = {Spanhel, Fabian and Kurz, Malte S},
  journal = {Electronic Journal of Statistics},
  volume  = {13},
  pages   = {1254--1291},
  year    = {2019}
}

@article{stoeber2013simplified,
  title     = {Simplified pair copula constructions—limitations and extensions},
  author    = {Stoeber, Jakob and Joe, Harry and Czado, Claudia},
  journal   = {Journal of Multivariate Analysis},
  volume    = {119},
  pages     = {101--118},
  year      = {2013},
  publisher = {Elsevier}
}

@article{hansen2011model,
  title     = {The model confidence set},
  author    = {Hansen, Peter R and Lunde, Asger and Nason, James M},
  journal   = {Econometrica},
  volume    = {79},
  number    = {2},
  pages     = {453--497},
  year      = {2011},
  publisher = {Wiley Online Library}
}

@article{mogstad2024inference,
  title     = {Inference for ranks with applications to mobility across neighbourhoods and academic achievement across countries},
  author    = {Mogstad, Magne and Romano, Joseph P and Shaikh, Azeem M and Wilhelm, Daniel},
  journal   = {Review of Economic Studies},
  volume    = {91},
  number    = {1},
  pages     = {476--518},
  year      = {2024},
  publisher = {Oxford University Press US}
}

@article{zhang2024winners,
  title   = {Winners with confidence: Discrete argmin inference with an application to model selection},
  author  = {Zhang, Tianyu and Lee, Hao and Lei, Jing},
  journal = {arXiv preprint arXiv:2408.02060},
  year    = {2024}
}

@article{kim2025locally,
  title   = {Locally minimax optimal and dimension-agnostic discrete argmin inference},
  author  = {Kim, Ilmun and Ramdas, Aaditya},
  journal = {arXiv preprint arXiv:2503.21639},
  year    = {2025}
}

@inproceedings{czado2013selection,
  author    = {Czado, Claudia
               and Brechmann, Eike Christian
               and Gruber, Lutz},
  editor    = {Jaworski, Piotr
               and Durante, Fabrizio
               and H{\"a}rdle, Wolfgang Karl},
  title     = {Selection of Vine Copulas},
  booktitle = {Copulae in Mathematical and Quantitative Finance},
  year      = {2013},
  publisher = {Springer Berlin Heidelberg},
  address   = {Berlin, Heidelberg},
  pages     = {17--37}
}

@article{mueller2018representing,
  author    = {Dominik M{\"u}ller and Claudia Czado},
  title     = {Representing Sparse {G}aussian {DAG}s as Sparse {R}-Vines Allowing for Non-{G}aussian Dependence},
  journal   = {Journal of Computational and Graphical Statistics},
  volume    = {27},
  number    = {2},
  pages     = {334-344},
  year      = {2018},
  publisher = {Taylor & Francis},
}

@article{kurowicka2011optimal,
  title     = {Optimal truncation of vines},
  author    = {Kurowicka, Dorota},
  journal   = {Dependence Modeling: Vine Copula Handbook},
  year      = {2011},
  pages     = {233},
  publisher = {World Scientific Publishing Co., Singapore}
}

@article{brechmann2015truncation,
  title     = {Truncation of vine copulas using fit indices},
  author    = {Brechmann, Eike C and Joe, Harry},
  journal   = {Journal of Multivariate Analysis},
  volume    = {138},
  pages     = {19--33},
  year      = {2015},
  publisher = {Elsevier}
}

@article{tepegjozova2022nonparametric,
  title     = {Nonparametric C-and D-vine-based quantile regression},
  author    = {Tepegjozova, Marija and Zhou, Jing and Claeskens, Gerda and Czado, Claudia},
  journal   = {Dependence Modeling},
  volume    = {10},
  number    = {1},
  pages     = {1--21},
  year      = {2022},
  publisher = {De Gruyter}
}

@article{kraus2017d,
  title     = {D-vine copula based quantile regression},
  author    = {Kraus, Daniel and Czado, Claudia},
  journal   = {Computational Statistics \& Data Analysis},
  year      = {2017},
  pages     = {1--18},
  volume    = {110C},
  doi       = {10.1016/j.csda.2016.12.009},
  publisher = {Elsevier}
}

@article{hoeting1999bayesian,
  title     = {Bayesian model averaging: a tutorial (with comments by M. Clyde, David Draper and EI George, and a rejoinder by the authors},
  author    = {Hoeting, Jennifer A and Madigan, David and Raftery, Adrian E and Volinsky, Chris T},
  journal   = {Statistical science},
  volume    = {14},
  number    = {4},
  pages     = {382--417},
  year      = {1999},
  publisher = {Institute of Mathematical Statistics}
}

@book{burnham2002model,
  title     = {Model selection and multimodel inference: a practical information-theoretic approach},
  author    = {Burnham, Kenneth P and Anderson, David R},
  year      = {2002},
  publisher = {Springer}
}

@article{Pace1997,
  author  = {Pace, R. Kelley and Barry, Ronald},
  title   = {Sparse Spatial Autoregressions},
  journal = {Statistics \& Probability Letters},
  volume  = {33},
  number  = {3},
  pages   = {291--297},
  year    = {1997},
  doi     = {10.1016/S0167-7152(96)00140-X}
}

@manual{R2025,
  title        = {R: A Language and Environment for Statistical Computing},
  author       = {{R Core Team}},
  organization = {R Foundation for Statistical Computing},
  address      = {Vienna, Austria},
  year         = {2025},
  url          = {https://www.R-project.org/}
}

@article{bentkus1996berry,
  title     = {The Berry-Esseen bound for Student's statistic},
  author    = {Bentkus, Vidmantas and G{\"o}tze, Friedrich},
  journal   = {The Annals of Probability},
  volume    = {24},
  number    = {1},
  pages     = {491--503},
  year      = {1996},
  publisher = {Institute of Mathematical Statistics}
}

@article{scikit-learn,
  title   = {Scikit-learn: Machine Learning in {P}ython},
  author  = {Pedregosa, F. and Varoquaux, G. and Gramfort, A. and Michel, V.
             and Thirion, B. and Grisel, O. and Blondel, M. and Prettenhofer, P.
             and Weiss, R. and Dubourg, V. and Vanderplas, J. and Passos, A. and
             Cournapeau, D. and Brucher, M. and Perrot, M. and Duchesnay, E.},
  journal = {Journal of Machine Learning Research},
  volume  = {12},
  pages   = {2825--2830},
  year    = {2011}
}

@article{numpy,
  title     = {Array programming with {NumPy}},
  author    = {Charles R. Harris and K. Jarrod Millman and St{\'{e}}fan J.
               van der Walt and Ralf Gommers and Pauli Virtanen and David
               Cournapeau and Eric Wieser and Julian Taylor and Sebastian
               Berg and Nathaniel J. Smith and Robert Kern and Matti Picus
               and Stephan Hoyer and Marten H. van Kerkwijk and Matthew
               Brett and Allan Haldane and Jaime Fern{\'{a}}ndez del
               R{\'{i}}o and Mark Wiebe and Pearu Peterson and Pierre
               G{\'{e}}rard-Marchant and Kevin Sheppard and Tyler Reddy and
               Warren Weckesser and Hameer Abbasi and Christoph Gohlke and
               Travis E. Oliphant},
  year      = {2020},
  month     = sep,
  journal   = {Nature},
  volume    = {585},
  number    = {7825},
  pages     = {357--362},
  doi       = {10.1038/s41586-020-2649-2},
  publisher = {Springer Science and Business Media {LLC}},
}

@misc{pandas,
  author    = {{The pandas development team}},
  title     = {pandas-dev/pandas: Pandas (v2.3.3)},
  year      = {2025},
  publisher = {Zenodo},
  version   = {v2.3.3},
  doi       = {10.5281/zenodo.17229934},
  url       = {https://doi.org/10.5281/zenodo.17229934}
}

@misc{joblib,
  title  = {Joblib: running Python functions as pipeline jobs},
  author = {{Joblib Development Team}},
  year   = {2025},
  url    = {https://joblib.readthedocs.io/}
}

\appendix
%

%

\onecolumn


\section{Details on the MCS Algorithm} \label{sec:mcs_details}

In the following, we recall the relevant details of the MCS algorithms given by \citet{kim2025locally} to keep our paper self-contained.
Recall the notation of our main paper, specifically \Cref{alg:MCS}: There are $M$ candidate models $\{\Vcal_m\}_{m=1}^M$ and we have already computed validation losses $L_{i,m}$ for $1 \le i \le n_{\val}$ and $1 \le m \le M$. For ease of notation, define the loss differences $\Delta_{m,k}^{(i)} = L_{i,m} - L_{i,k}$ for $1 \le i \le n_{\val}$ and $1 \le m,k \le M$.

\subsection{The DA Test Statistic and  \texttt{DA-MCS-marg} Construction} \label{sec:mcs_marg}

We first describe the discrete argmin (DA) hypothesis test introduced by \citet{kim2025locally} which serves as the foundation for constructing our MCS. The basic algorithm to compute the test statistic from a set of loss observations is given in \Cref{alg:DAtest}.

\numberwithin{algorithm}{section}

\begin{algorithm}[h]
  \caption{DA test statistic}
  \label{alg:DAtest}
  \begin{algorithmic}[1]
    \INPUT Loss observations $\{L_{i,m}\colon 1 \le i \le N,\, 1 \le m \le M\}$, target index $r \in \{1,\dots,M\}$.
    \OUTPUT Test statistic $\wh T_r$.\medskip

    \State Define a split of indices: $I_1 = \{1,\dots,\lfloor N/2\rfloor\}$ and $I_2 = \{\lfloor N/2\rfloor+1,\dots,N\}$. \medskip
    \State On $I_1$, find the alternative model that minimizes the cumulative validation loss:
      $
        \wh m_{-r} \in \argmin_{m \in \{1,\dots,M\}\setminus\{r\}} \sum_{i \in I_1} L_{i,m},
     $
      breaking ties by the smallest index.  \medskip
    \State On $I_2$, compute
      $
          \wh T_r \;=\; \frac{1}{\wh \sigma_{r,\wh m_{-r}}} \cdot \frac{1}{\sqrt{|I_2|}} \sum_{i \in I_2} \Delta_{r,\wh m_{-r}}^{(i)},
      $
      where $\wh \sigma_{r,\wh m_{-r}}^2$ is the sample variance of $\{\Delta_{r,\wh m_{-r}}^{(i)}: i \in I_2\}$.
  \end{algorithmic}
\end{algorithm}

Under the null hypothesis that model $r$ has the smallest expected out-of-sample loss among the $M$ candidates, a one-sided test rejects for large positive values of $\wh T_r$, e.g. when $\wh T_r > \Phi^{-1}(1-\alpha)$, where $\Phi$ is the standard normal CDF.


To obtain an MCS with marginal coverage guarantees, one can simply invert the DA test: compute $\wh T_r$ for all $r=1,\dots,M$ by running \Cref{alg:DAtest} and define
\begin{align*}
  \wh \Theta = \left\{r \in \{1, \dots, M\} \colon \; \wh T_r \le \Phi^{-1}(1 - \alpha)\right\}.
\end{align*}
This procedure, which we call \texttt{DA-MCS-marg}, yields an MCS with marginal coverage $1 - \alpha$ under mild conditions, as formally stated in \cref{prop:mcs_marginal}.

\subsection{Efficient Implementation and Complexity} \label{sec:mcs_complexity}

When implemented exactly as described above, the \texttt{DA-MCS-marg} procedure has complexity $O(n_{\val} M^2)$:
\begin{itemize}
  \item Each call to \cref{alg:DAtest} requires $O(n_{\val} M)$ time to compute the cumulative losses, $O(M)$ time to find the argmin, and $O(n_{\val})$ time to compute the sample variance and test statistic. This results in a total of $O(n_{\val} M)$ time per call.
  \item  Because \cref{alg:DAtest} is called $M$ times in total for \texttt{DA-MCS-marg}, the overall complexity is $O(n_{\val} M^2)$
\end{itemize}
However, this complexity can be reduced to $O(n_{\val} M)$ for both procedures by avoiding redundant computation when looping over $r \in \{1, \dots, M\}$:
\begin{itemize}
  \item The column sums $\bar L_m = \sum_{i \in I_1} L_{i,m}$ for $1 \le m \le M$ can be computed once in $O(n_{\val} M)$ time and accessed in $O(1)$ time in each call to \cref{alg:DAtest}.
  \item The argmins $\wh m_{-1}, \dots, \wh m_{-M}$, can be computed simultaneously in $O(M)$ time. Specifically, one pass through the precomputed column sums $\bar L_1, \dots, \bar L_M$ suffices to find the two smallest numbers $\bar L_{j_1}$ and $\bar L_{j_2}$ among $\{\bar L_1, \dots, \bar L_M\}$
  (breaking ties by index) along with corresponding indices $j_1$ and $j_2$. Then, use another pass to set $\wh m_{-r} = j_1$ if $\bar L_r \neq \bar L_{j_1}$ and $\wh m_{-r} = j_2$ otherwise, for all $r=1,\dots,M$.
\end{itemize}

\cref{fig:mcs_naive_vs_fast} compares the average computation time of the naive $O(n_{\val} M^2)$ implementation and the fast $O(n_{\val} M)$ implementation of \texttt{DA-MCS-marg} (as well as a variant for obtaining uniform MCS) on a log-log scale.

\begin{figure}[!t]
  \centering
  \includegraphics[width=0.85\textwidth]{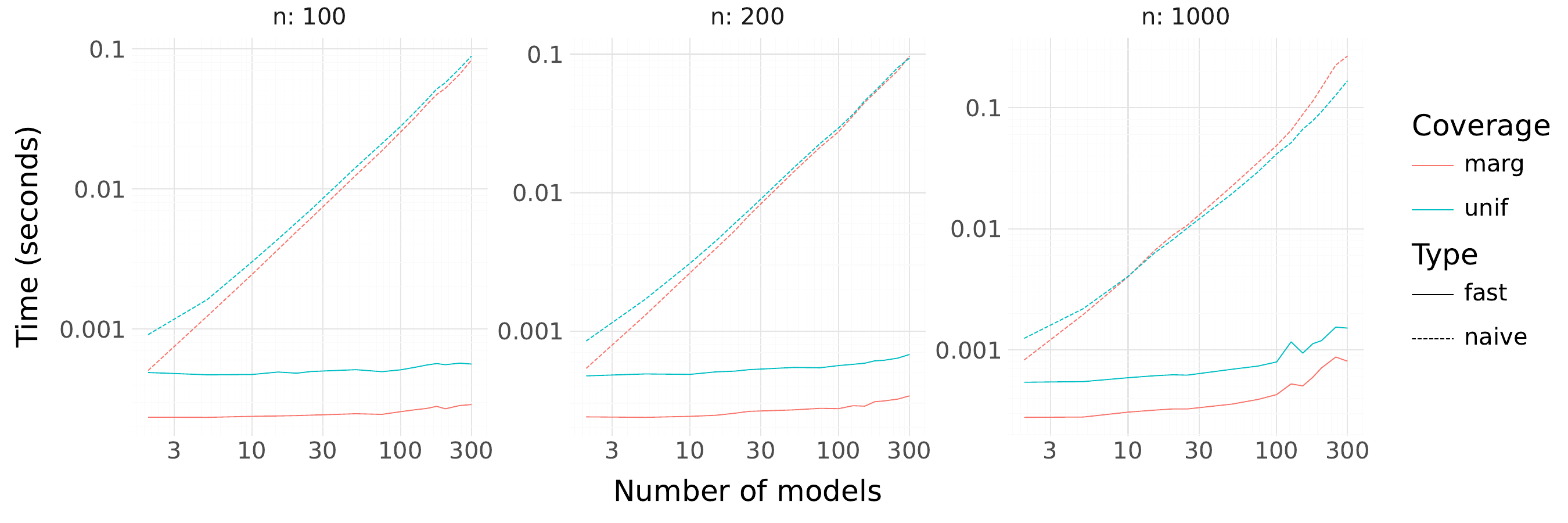}
  \caption{Average computation time of different DA-MCS Python implementations.}
  \label{fig:mcs_naive_vs_fast}
\end{figure}

\subsection{Runtime Comparison of MCS Methods}

In \cref{fig:mcs_comparison}, we compare the average computation time of different MCS methods.
Method names are from the R script provided by the authors of \citet{kim2025locally}; \texttt{KR-PLG} refers to \texttt{DA-MCS-marg}.
We observe that the method is significantly faster than competing methods.

\begin{figure}[!t]
  \centering
  \includegraphics[width=0.85\textwidth]{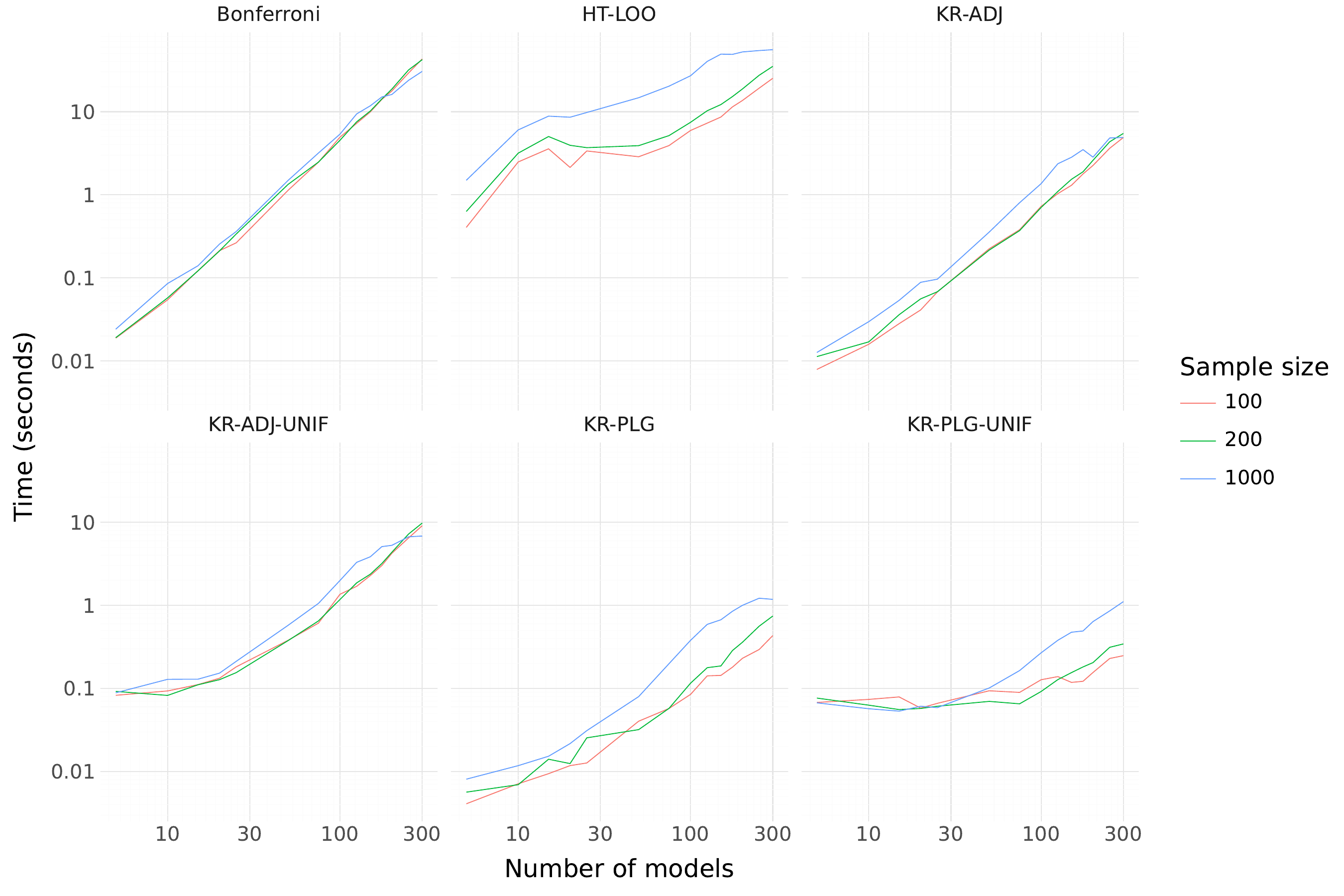}
  \caption{Average computation time of MCS methods.}
  \label{fig:mcs_comparison}
\end{figure}

\section{Proofs of the Theoretical Results} \label{sec:proofs}

The proof uses similar arguments as \citet{kim2025locally}, but requires some technical modifications to adapt to our setting.
We start with a small lemma simplifying the moment condition required by \citet{kim2025locally}.
To simplify notation, we write $\E'[\cdot] = \E[\cdot \mid \Dcal_\train]$ and $\Pr'(\cdot) = \Pr(\cdot \mid \Dcal_\train)$ throughout this section.

\begin{lemma} \label{lem:moment}
  Let $M \in \N$ and define $\Delta_{m, k} = L(\wh f_{\Vcal_m, \Dcal_\train}, \bm Z) - L(\wh f_{\Vcal_k, \Dcal_\train}, \bm Z)$ and $\bar \Delta_{m, k} = \Delta_{m, k} - \E'[\Delta_{m, k}]$.
  If
  \begin{align} \label{eq:moment_cond}
    \limsup_{n \to \infty}\max_{1 \le m \le M}\E'[| L(\wh f_{\Vcal_m, \Dcal_\train}, \bm Z) |^3 ] < \infty
  \end{align}
  there is $K \in (0, \infty)$ such that for all $n$ large enough,
  \begin{align} \label{eq:toshow}
    \max_{1 \le m, k \le M} \frac{\E'[|\bar \Delta_{m, k}|^3]}{\E'[|\bar \Delta_{m, k}|^2]^{3/2}} \le K,
  \end{align}
  with the convention $0 / 0 = 1$.
\end{lemma}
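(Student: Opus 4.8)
The plan is to prove the bound pairwise and then take a maximum. Since $M$ is finite, the maximum in \eqref{eq:toshow} ranges over finitely many pairs, so it suffices to show that for each fixed $(m,k)$ the Lyapunov ratio $R_n := \E'[|\bar\Delta_{m,k}|^3]/\E'[|\bar\Delta_{m,k}|^2]^{3/2}$ satisfies $\limsup_{n\to\infty} R_n < \infty$; one then takes $K$ to be one more than the (finite) maximum of these limsups, which dominates every $R_n$ for all large $n$. Throughout I write $L_m = L(\wh f_{\Vcal_m,\Dcal_\train}, \bm X)$, so that $\Delta_{m,k} = L_m - L_k$ and $\bar\Delta_{m,k}$ is its $\E'$-centered version. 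This ratio is exactly the standardized third moment entering the Berry--Esseen bound underlying the DA test of \citet{kim2025locally}, which is why controlling it suffices to transfer their guarantees to our conditional setup.

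First I would bound the numerator uniformly in $n$. Writing $\|\cdot\|_p$ for the $L^p(\Pr')$-norm, centering at most doubles the norm, since $\|\bar\Delta_{m,k}\|_3 \le \|\Delta_{m,k}\|_3 + |\E'[\Delta_{m,k}]| \le 2\|\Delta_{m,k}\|_3$, and Minkowski's inequality gives $\|\Delta_{m,k}\|_3 \le \|L_m\|_3 + \|L_k\|_3$. Cubing and using $(a+b)^3 \le 4(a^3+b^3)$ yields a universal constant $C$ with $\E'[|\bar\Delta_{m,k}|^3] \le C\bigl(\E'[|L_m|^3] + \E'[|L_k|^3]\bigr)$. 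The moment condition \eqref{eq:moment_cond} then makes the right-hand side bounded for all large $n$, so $\limsup_n \E'[|\bar\Delta_{m,k}|^3] =: C_{m,k} < \infty$.

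The denominator is handled by a case split. If $\E'[|\bar\Delta_{m,k}|^2] = 0$, then $\bar\Delta_{m,k}=0$ almost surely, the numerator vanishes as well, and the convention $0/0 = 1$ gives $R_n = 1$. If instead $\liminf_n \E'[|\bar\Delta_{m,k}|^2] =: c_{m,k} > 0$, then for all large $n$ the denominator exceeds $(c_{m,k}/2)^{3/2}$, whence $R_n \le C_{m,k}/(c_{m,k}/2)^{3/2} < \infty$. In particular, under the natural condition that distinct sampled structures give rise to distinct limiting best-in-class densities---so that the limiting loss difference is non-degenerate---this second case always applies and the proof is complete.

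The main obstacle is the borderline regime where the variance is positive for every $n$ but $\liminf_n \E'[|\bar\Delta_{m,k}|^2] = 0$, i.e.\ two structures become asymptotically indistinguishable; here both numerator and denominator vanish and a finite third moment alone does not pin down their ratio. To close this case I would exploit the scale-invariance of $R_n$: with $\sigma_n^2 = \E'[|\bar\Delta_{m,k}|^2]$ we have $R_n = \E'[|\bar\Delta_{m,k}/\sigma_n|^3]$, so it suffices to bound the third moment of the \emph{standardized} difference. Writing $\Delta_{m,k}(\bm x) = \log \wh f_{\Vcal_k,\Dcal_\train}(\bm x) - \log \wh f_{\Vcal_m,\Dcal_\train}(\bm x)$ and using the convergence of the fitted densities, one expects $\bar\Delta_{m,k}/\sigma_n$ to converge to a limiting ``score direction'' whose third moment is finite by the regularity of the limiting log-densities; the scale $\sigma_n$ then cancels and $R_n$ tends to the finite Lyapunov ratio of this limit. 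Making this convergence rigorous, rather than the routine moment bookkeeping of the first two steps, is the technical heart of the statement.
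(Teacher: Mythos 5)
Your first two steps are, almost line for line, the paper's proof. The same centering-plus-Minkowski bookkeeping bounds the numerator (the paper gets $\max_{m,k}\E'[|\bar\Delta_{m,k}|^3]\le 64C$ via $(a+b)^3\le 4(|a|^3+|b|^3)$ and Jensen), and zero-variance pairs are dispatched identically: $\E'[|\bar\Delta_{m,k}|^2]=0$ forces $\bar\Delta_{m,k}=0$ almost surely, so the $0/0=1$ convention applies. Where you part ways is the remaining pairs. The paper performs no $\liminf$ analysis at all: it defines $\Scal=\{(m,k)\colon \E'[|\bar\Delta_{m,k}|^2]>0\}$, sets $c=\min_{(m,k)\in\Scal}\E'[|\bar\Delta_{m,k}|^2]^{3/2}>0$ (a minimum taken at the given $n$), and concludes with $K=64C/c+1$. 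In other words, the paper never separates your case (b) from your case (c); it implicitly treats $c$ as a constant, even though $\E'$, and hence $\Scal$ and $c$, depend on $n$ through $\Dcal_\train$ and the fitted densities.

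Your instinct that the borderline regime is the crux is therefore legitimate, but your proposed treatment of it is not a proof. Nothing in hypothesis \eqref{eq:moment_cond} yields convergence of the fitted densities, the existence of a ``limiting score direction,'' or finiteness of its third moment; and even granting distributional convergence of $\bar\Delta_{m,k}/\sigma_n$, convergence of its third moment would require uniform integrability you have no means to establish. Worse, the case cannot be closed from the stated assumptions alone: since the Lyapunov ratio is scale-invariant, take $\Delta_{m,k}$ distributed (conditionally) as $\eps_n$ times a centered Bernoulli$(p_n)$ variable with $p_n\to 0$; all loss moments in \eqref{eq:moment_cond} remain bounded, yet the ratio is of order $p_n^{-1/2}\to\infty$. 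So your case (c) is a genuine gap, and one that cannot be repaired without strengthening the hypotheses (e.g., requiring the nonzero variances of loss differences to stay bounded away from zero uniformly in $n$). The honest comparison is this: the rigorous portion of your argument coincides with the paper's proof, and the ``technical heart'' you flag is precisely the point the paper's own argument glosses over by treating its $n$-dependent constant $c$ as universal.
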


\begin{proof}[Proof of \Cref{lem:moment}]
  The moment condition \eqref{eq:moment_cond} implies that there are $C \in (0, \infty)$ and $n_0 \in \N$ such that for all $n \ge n_0$,
  \begin{align*}
    \max_{1 \le m \le M} \E'[| L(\wh f_{\Vcal_m, \Dcal_\train}, \bm Z) |^3] \le C.
  \end{align*}
  Suppose $n \ge n_0$ from now on. It holds
  \begin{align*}
    \E'[|\bar \Delta_{m, k}|^3] & \le \E'[(| \Delta_{m, k}| + |\E'[ \Delta_{m, k}]|)^3]
    \le 4 \E'[|\Delta_{m, k}|^3] + 4 |\E'[ \Delta_{m, k}]|^3
    \le 8 \E'[|\Delta_{m, k}|^3],
  \end{align*}
  using triangle inequality in the first, $(a + b)^3 \le 4(|a|^3 + |b|^3)$ in the second, and Jensen's inequality in the last step. Using the same arguments, we also have
  \begin{align*}
    \E'[| \Delta_{m, k}|^3] & \le 4 \E'[| L(\wh f_{\Vcal_m, \Dcal_\train}, \bm Z)|^3] +  4 \E'[| L(\wh f_{\Vcal_k, \Dcal_\train}, \bm Z)|^3] \le 8C.
  \end{align*}
  Overall, we have shown that $\max_{1 \le m, k \le M}\E'[|\bar \Delta_{m, k}|^3] \le 64C$.
  Now let
  \begin{align*}
    \Scal = \left\{(m, k) \in \{1, \dots, M\}^2\colon E'[|\bar \Delta_{m, k}|^2] > 0 \right\},
  \end{align*}
  $\Scal^c = \{1, \dots, M\}^2 \setminus \Scal$,
  and define $c = \min_{(m, k) \in \Scal} E'[|\bar \Delta_{m, k}|^2]^{3/2}$, noting that $c > 0$. We have
  \begin{align*}
    \max_{1 \le m, k \le M} \frac{\E'[|\bar \Delta_{m, k}|^3]}{\E'[|\bar \Delta_{m, k}|^2]^{3/2}}
     & \le \max_{(m, k) \in \Scal} \frac{\E'[|\bar \Delta_{m, k}|^3]}{\E'[|\bar \Delta_{m, k}|^2]^{3/2}} + \max_{(m, k) \in \Scal^c} \frac{\E'[|\bar \Delta_{m, k}|^3]}{\E'[|\bar \Delta_{m, k}|^2]^{3/2}} \\
     & \le 64 C/ c + \max_{(m, k) \in \Scal^c} \frac{\E'[|\bar \Delta_{m, k}|^3]}{\E'[|\bar \Delta_{m, k}|^2]^{3/2}} .
  \end{align*}
  Recall that $(m, k) \in \Scal^c$ implies $\E'[|\bar \Delta_{m, k}|^2] = 0$. This further implies $\bar \Delta_{m, k} = 0$ almost surely, and because $\E'[|\bar \Delta_{m, k}|^3] < \infty$, it follows that $\E'[|\bar \Delta_{m, k}|^3] = 0$. Altogether have shown that
  \begin{align*}
    \max_{1 \le m, k \le M} \frac{\E'[|\bar \Delta_{m, k}|^3]}{\E'[|\bar \Delta_{m, k}|^2]^{3/2}} \le 64 C/ c + 1 =: K < \infty,
  \end{align*}
  which completes the proof.
\end{proof}

\begin{proof}[Proof of \Cref{prop:mcs_marginal} (i)]
    Let $r \in \Theta^*$ be arbitrary but fixed.  
    Define
    \begin{align*}
         \wh T_{r}^* = \frac{1}{\wh \sigma_{r,\wh m_{-r}}} \cdot \frac{1}{\sqrt{|I_2|}} \sum_{i \in I_2} \left( \Delta_{r,\wh m_{-r}}^{(i)} -  \E'[\Delta_{r,\wh m_{-r}}^{(i)} \mid \wh m_{-r}] \right),
    \end{align*}
    a centered version of  $\wh T_r$ that satisfies $\wh T_r \le \wh T_r^*$ since $r \in \Theta^*$ implies $\E'[\Delta_{r,\wh m_{-r}}^{(i)} \mid \wh  m_{-r}] \le 0$. It holds
    \begin{align*}
      \Pr'(r \in \wh \Theta \mid \wh m_{-r}) & = \Pr'(\wh T_{r} \le \Phi^{-1}(1 - \alpha) \mid \wh m_{-r}) \\
      &\ge \Pr'(\wh T_{r}^* \le \Phi^{-1}(1 - \alpha) \mid \wh m_{-r}) \\
      &\ge 1 - \alpha - \sup_{t \in \R}\left| \Pr'(\wh T_{r}^* \le t \mid \wh m_{-r}) - \Phi(t) \right|.
    \end{align*}
    To show that the supremum on the right is negligible, we apply the Berry-Esseen bound for Student's statistic of \citet{bentkus1996berry}, conditionally on $\Dcal_\train, \Vcal_1, \dots, \Vcal_M$, and $\wh m_{-r}$ (which are all independent of $\{L_{i, m}\colon n_\val / 2 < i \le n_\val \}$). This yields 
    \begin{align*}
      \sup_{t \in \R} \left| \Pr'(\wh T_{r}^* \le t \mid \wh m_{-r}) - \Phi(t) \right| \le \frac{C}{\sqrt{n_{\val}}} \frac{\E'[|\bar \Delta_{r,   \wh m_{-r}}|^3 \mid \wh m_{-r}]}{\E'[|\bar \Delta_{r,   \wh m_{-r}}|^2 \mid \wh m_{-r}]^{3/2}}  \le \frac{C}{\sqrt{\lfloor \eta   n\rfloor}} \max_{1 \le m, k \le M} \frac{\E'[|\bar \Delta_{m,  k}|^3]}{\E'[|\bar \Delta_{m, k}|^2]^{3/2}},
    \end{align*}
    for a universal constant $C \in (0, \infty)$.
    \cref{lem:moment} implies that the right-hand side is bounded by $CK / \sqrt{\lfloor \eta   n\rfloor}$ for all $n$ large enough. Altogether, this gives
    \begin{align*}
       \Pr'(r \in \wh \Theta) = \E'[\Pr'(r \in \wh \Theta \mid \wh m_{-r})] \ge 1 - \alpha - \frac{CK}{\sqrt{\lfloor \eta   n\rfloor}} =  1 - \alpha - o(1),
    \end{align*}
    as claimed.
\end{proof}

\begin{proof}[Proof of \Cref{prop:mcs_marginal} (ii)]
Let
\begin{align*}
  \wh R_1(\Vcal)
  =
  \frac{1}{|I_1|}
  \sum_{i\in I_1}
  L(\wh f_{\Vcal,\Dcal_\train},\bZ_i) ,\qquad \wh R_2(\Vcal)
  =
  \frac{1}{|I_2|}
  \sum_{i\in I_2}
  L(\wh f_{\Vcal,\Dcal_\train},\bZ_i),
  \end{align*}
  denote the validation risk on the selection and testing split, respectively, and by $\Wcal_{-\Vcal} \in \argmin_{\Wcal\in\Theta\setminus\{\Vcal\}} \wh R_1(\Wcal)$ the selected competitor of $\Vcal$.
  Further, define the set of ``far'' models as
  \begin{align*}
        \Theta_{\far}
     =
    \left\{\Vcal\notin\Theta^*: R'(\Vcal)-R'(\Vcal^{*})>\frac{a_n}{\sqrt{n}}\right\},
  \end{align*}
  where $a_n = O(\sqrt{n})$ is a sequence that can be chosen arbitrarily slowly diverging to $+\infty$.
    
  We first show that, uniformly over $\Vcal\in\Theta_{\far}$,
  $\widehat{\Wcal}_{-\Vcal}$ is near-oracle in terms of population risk. Since
  $\Vcal\in\Theta_{\far}$ implies $\Vcal\notin\Theta^*$, an oracle
  $\Vcal^*\in\Theta^*$ belongs to $\Theta\setminus\{\Vcal\}$. Hence, by
  definition of $\widehat{\Wcal}_{-\Vcal}$, $ \wh R_1(\widehat{\Wcal}_{-\Vcal})
  \le
  \wh R_1(\Vcal^*)$, and therefore,
  \begin{align*}
  R'(\widehat{\Wcal}_{-\Vcal}) - R'(\Vcal^*)
  &=
  \{R'(\widehat{\Wcal}_{-\Vcal})
  - \wh R_1(\widehat{\Wcal}_{-\Vcal})\}
  +
  \{\wh R_1(\widehat{\Wcal}_{-\Vcal})
  - \wh R_1(\Vcal^*)\}
  +
  \{\wh R_1(\Vcal^*) - R'(\Vcal^*)\} \\
  &\le
  2\max_{\Wcal\in\Theta}
  |\wh R_1(\Wcal)-R'(\Wcal)|.
  \end{align*}
Because $M = |\Theta|$ is fixed and the third moments of the losses are uniformly
bounded, Chebyshev's inequality and a union bound imply
\begin{align*} 
  \max_{\Wcal\in\Theta}
  |\wh R_1(\Wcal)-R'(\Wcal)|
  =
  O_{\Pr'}(n^{-1/2}) = o_{\Pr'}(a_n/\sqrt n),
  \end{align*}
  whence
  \begin{align} \label{eq:near_oracle}
    R'(\widehat{\Wcal}_{-\Vcal}) - R'(\Vcal^*) = o_{\Pr'}(a_n/\sqrt n).
  \end{align}
By definition, for $\Vcal\in\Theta_{\far}$,
\begin{align}\label{eq:far}
  R'(\Vcal)-R'(\Vcal^*) > \frac{a_n}{\sqrt n}.
\end{align}
Combining \eqref{eq:near_oracle} and \eqref{eq:far} yields, uniformly over
$\Vcal\in\Theta_{\far}$,
  \begin{align*}
    R'(\Vcal)-R'(\widehat{\Wcal}_{-\Vcal})
    >
    \frac{a_n}{\sqrt n}\{1+o_{\Pr'}(1)\}.
  \end{align*}
  Now consider the corresponding DA statistic:
  \begin{align*}
    \wh T_{\mathrm{DA}}(\Vcal)
    &:=
    \frac{1}{\wh \sigma_{\Vcal, \widehat{\Wcal}_{-\Vcal}}} \times \sqrt{{|I_2|}} \times [R_2(\Vcal) - R_2(\widehat{\Wcal}_{-\Vcal})].
  \end{align*}
  Using the third moment bound, Markov's inequality, and the union bound (noting that $M$ is fixed), we get 
  and
  \begin{align*}
  \max_{\Vcal\in\Theta_{\far}}\wh \sigma_{\Vcal, \widehat{\Wcal}_{-\Vcal}} = O_{\Pr'}(1) \quad \text{and} \quad \max_{\Vcal\in\Theta_{\far}}
  |R_2(\Vcal) - R'(\Vcal)|
  =
  O_{\Pr'}(1 / \sqrt{n}).
\end{align*}
  Further, by the definition of $I_2$, it holds $|I_2| \ge n_{\val} / 2 \ge \lfloor \eta n / 2\rfloor$, so $1/\sqrt{|I_2|} = O(1/\sqrt{n})$.
  Combining these facts it holds, uniformly over $\Vcal\in\Theta_{\far}$,
  \begin{align*}
    \frac{1}{\wh T_{\mathrm{DA}}(\Vcal)}
    =
    \frac{\wh \sigma_{\Vcal, \widehat{\Wcal}_{-\Vcal}}}{\sqrt{|I_2| [R_2(\Vcal) - R_2(\widehat{\Wcal}_{-\Vcal})]}}
    =  \frac{O_{\Pr'}(1)}{\sqrt{n}[R'(\Vcal) - R'(\widehat{\Wcal}_{-\Vcal}) + O_{\Pr'}(1 / \sqrt{n})]}  
    = O_{\Pr'}\left(\frac{1}{a_n}\right).
  \end{align*}
  Since $a_n \to \infty$, this implies,
  \begin{align*}
    \max_{\Vcal\in\Theta_{\far}}
    \Pr'\left(\Vcal\in\wh\Theta\right) = \max_{\Vcal\in\Theta_{\far}}
    \Pr'\left( T_{\mathrm{DA}}(\Vcal) \le \Phi^{-1}(1 - \alpha)\right)
    \to 0.
  \end{align*}

  Because $M$ is fixed, a union bound gives
  \begin{align*}
      \Pr'\left(\exists \Vcal \in \wh\Theta \colon R'(\Vcal)-R'(\Vcal^{*}) > \frac{a_n}{\sqrt n} \right) =
     \Pr'\left(\wh\Theta\cap\Theta_{\far}\ne\emptyset\right)
    \le
    \sum_{\Vcal\in\Theta_{\far}}
    \Pr'\left(\Vcal\in\wh\Theta\right)
    \to 0,
  \end{align*}
  as claimed.
\end{proof}


\section{Implementation Details} \label{sec:impl_details}

To ensure both reproducibility and impact, we bundle the proposed methods into a Python package named \texttt{\textbf{vinesforests}}, whose estimators follow the \texttt{scikit-learn}~\citep{scikit-learn} API (\texttt{fit}, \texttt{predict}, \texttt{score}, \texttt{score\_samples}, etc.) and are compatible with \texttt{numpy}~\citep{numpy} arrays and, where indicated, \texttt{pandas}~\citep{pandas} \texttt{DataFrame}s. As part of the supplementary material, we provide a \texttt{zip} file containing the following file structure:

\begin{itemize}
  \item \texttt{README.md} — Instructions for installation and running experiments.
  \item \texttt{pyproject.toml} — Configuration file for the package, Pixi and dependencies.
  \item \texttt{src/} — Source code for the \texttt{vinesforests} package.
  \begin{itemize}
    \item \texttt{vine}: This module exposes two estimator classes, \texttt{VineDensity} and \texttt{VineRegressor}, which implement single vine density estimation and regression, respectively.
    \item \texttt{forest}: This module exposes two estimator classes, \texttt{VineForestDensity} and \texttt{VineForestRegressor}, which implement the ensemble methods for density estimation and regression, respectively.
    \item \texttt{benchmark}: This module exposes two estimator classes, \texttt{KrausVineDensity} and \texttt{KrausVineRegressor}, which are wrappers around R implementations of the baselines.
    \item \texttt{experiments}: This module contains utilities for running experiments, including a CLI interface.
    \item \texttt{mcs.py} — Implementation of the MCS procedures.
    \end{itemize}
  \item \texttt{tests/} — Unit tests.
  \item \texttt{notebooks/} — Scripts and Jupyter notebooks allowing to fully reproduce our experiments and analysis.
\end{itemize}

\subsection{Single vine estimators} The class \texttt{VineBase}, which derives from \texttt{sklearn.base.BaseEstimator}, centralizes preprocessing and vine--copula fitting using \texttt{pyvinecopulib}~\citep{pyvinecopulib}. Inputs may be \texttt{numpy} arrays or \texttt{pandas} \texttt{DataFrame}s containing numeric columns, ordered categoricals, and unordered categoricals. Unordered categoricals are expanded to dummy variables via a map $ x \mapsto \{\mathbf{1}(x = \ell_j)\}_{j>1} $, with dummies represented as ordered categoricals taking values $ \{0,1\} $.
Note that, for both \texttt{VineBase} and its subclasses \texttt{VineDensity} and \texttt{VineRegressor}, expensive computations are batch vectorized and processed in chunks of size \texttt{batch\_size} to trade memory for throughput.

\paragraph{Marginal distributions}
A per-feature univariate kernel density estimator, \texttt{pyvinecopulib.Kde1d}, is used for each marginal; its type is set to \texttt{"continuous"} or \texttt{"discrete"} according to the post-expansion schema. We refer to \href{https://vinecopulib.github.io/pyvinecopulib/examples/07_kde1d.html}{this notebook with examples} and \href{https://vinecopulib.github.io/pyvinecopulib/_generate/pyvinecopulib.Kde1d.html}{the \texttt{pyvinecopulib.Kde1d} documentation} for more details, notably on automatic bandwidth selection.
For a given model, the pseudo--observations used by the vine are computed as follows. For continuous variables, the probability integral transform $ U = F(Z) $ given by \texttt{pyvinecopulib.Kde1d.cdf} is used. For discrete features, both $ F(Z) $ and the left--limit $ F(Z^{-}) $ are computed and stacked.

\paragraph{Vine copula}
Given marginals $ \{F_j\}_j $, the vine copula $ C $ is fit on pseudo--observations $ U = (U_1,\ldots,U_d) $ with $ U_j = F_j(Z_j) $.
In the density context, $\bZ = \bX$ is the feature vector. 
In the regression context, $\bZ = (Y, \bX)$ is the joint vector of target and features, with $Y$ in the first dimension treated as continuous. 
For vine copulas, we use the \texttt{pyvinecopulib.Vinecop} class and refer to \href{https://vinecopulib.github.io/pyvinecopulib/_generate/pyvinecopulib.Vinecop.html}{the documentation} for details, as well as the \href{https://vinecopulib.github.io/pyvinecopulib/examples/02_vine_copulas.html}{notebook with examples}.
In \texttt{vinesforest}, a vine copula is estimated via \texttt{pyvinecopulib.Vinecop.from\_data(data=U, structure=..., var\_types=..., controls=...)}:
\begin{itemize}
  \item If a \texttt{structure} is not specified, \texttt{pyvinecopulib} performs structure selection using \citet{Dissmann2013} by default. Otherwise, it needs to be an instance of \href{https://vinecopulib.github.io/pyvinecopulib/_generate/pyvinecopulib.RVineStructure.html#pyvinecopulib.RVineStructure}{\texttt{pyvinecopulib.RVineStructure}} (e.g., generated via \texttt{pyvinecopulib.RVineStructure.simulate(d)}).
  \item By default, the controls are \path|pyvinecopulib.FitControlsVinecop(family_set=[pyvinecopulib.tll], num_threads=1)|, so that the \texttt{TLL} family is used for all pair-copulas. Other families can be specified via the \texttt{family\_set} argument, parallelism is controlled with \texttt{num\_threads}, and we refer to \href{https://vinecopulib.github.io/pyvinecopulib/_generate/pyvinecopulib.FitControlsVinecop.html}{the documentation} for more options.
\end{itemize}

This class provides a \texttt{\_pdf\_samples(X, y=None, log=False, copula\_only=False)} method which is reused by subclasses.
It implements the pipeline of computing pseudo--observations, evaluating the copula density via \texttt{pyvinecopulib.Vinecop.pdf}, and potentially multiplying by marginal densities using \texttt{pyvinecopulib.Kde1d.pdf}.



\paragraph{\texttt{VineDensity}} For this class, which derives from \texttt{VineBase}, the \texttt{fit} method calls the above pipeline on $ X $ only. The log-likelihood per sample is returned by \texttt{score\_samples(X)}, and the average by \texttt{score(X)}. Sampling draws $ U \sim C $ via \texttt{vine.simulate} and transforms back with the fitted marginals using $ X_j = F_j^{-1}(U_j) $.

\paragraph{\texttt{VineRegressor}}  For this class, which derives from \texttt{VineBase} as well, the \texttt{fit} fits the vine on the joint $ (Y, X) $. Predictions via \texttt{predict} use conditional importance weights derived from the copula density. For a test observation $ x $, the method computes
\begin{align*}
  w_i(x) \propto \begin{cases}
    c_{Y,X}(U_Y(y_i), U_X(x)),\quad i=1,\dots,n_{\text{train}} , & \text{if } \texttt{use\_grid = False}, \\
    c_{Y,X}(U_Y(y_i), U_X(x))  f_Y(y_i),\quad i=1,\dots,n_{\text{grid}} , & \text{if } \texttt{use\_grid = True},
  \end{cases}
\end{align*}
where the grid is a set of equally spaced grid points $\{y_1, \dots, y_G\}$ in $ Y $-space, and $ U_Y(y_i) = F_Y(y_i) $ is computed with the marginal CDF.
The second option is usually faster when the training set is large, as it reweights by the marginal density $ f_Y(y) $ to correct for non-uniformity of grid points instead of using the entire empirical distribution.
Conditional expectations are computed in closed form via $\sum_i w_i(x) y_i$, and conditional quantiles are computed by weighted quantiles of $ \{y_i\}_i $ using the ``inverted CDF'' definition.
For more details, we refer to the \href{https://numpy.org/devdocs/reference/generated/numpy.quantile.html}{\texttt{numpy.quantile}} definition, and in particular its \texttt{weights} argument.

To evaluate the conditional log-likelihood used in the forest selection (see below), \texttt{VineRegressor} provides a method \texttt{copula\_marginal\_density(X, log=False, n\_grid=101)}, which uses the identity
\[
  \log f_{Y\mid X}(y\mid x)
  = \log c_{Y,X}(U_Y(y), U_X(x)) - \log c_X(U_X(x)) + \log f_Y(y),
\]
where $ c_X(u_X) = \int_0^1 c_{Y,X}(u_Y, u_X)\,du_Y $ is approximated with Simpson's rule on a uniform grid over $ [0,1] $.

\subsection{Forest estimators (ensemble level)} 
Similarly as in the single vine case, a class \texttt{VineForestBase} deriving from \texttt{sklearn.base.BaseEstimator} implements structure randomization, bootstrap resampling, validation splitting, survivor (either via the MCS or simply those better than \citet{Dissmann2013} in the validation set) selection, and parallel fitting and prediction averaging.
It uses a base learner, which can be either \texttt{VineDensity} or \texttt{VineRegressor}, and has two subclasses \texttt{VineForestDensity} and \texttt{VineForestRegressor}.
After fitting the default estimator using \citep{Dissmann2013} for structure selection, for a requested $ M $ base learners, \texttt{VineForestBase} proceeds as follows:
\begin{enumerate}
  \item For each $ r=1,\ldots,M $, use \texttt{pyvinecopulib.RVineStructure.simulate} to simulate a random $ R $-vine structure $ \mathcal{V}_r $, optionally bootstraps the training rows, and fits a cloned base estimator with this structure.
  \item Scores each candidate on the held-out validation fold using per-sample log-likelihoods: $ \log f_X(x) $ for density and $ \log f_{Y\mid X}(y\mid x) $ for regression (as above).
  \item Selects survivors either by keeping all candidates strictly improving over the default or by running the MCS procedure at level $ \alpha $ (\texttt{da\_mcs\_marg}). Optionally \texttt{best\_only} retains the single best survivor.
  \item Refits all survivors on the full training set and averages predictions at inference time (log of mean densities for \texttt{VineForestDensity}; averaging conditional weights for \texttt{VineForestRegressor}).
\end{enumerate}
Note that:
\begin{itemize}
  \item Parallelism uses \texttt{joblib}~\citep{joblib}. When \texttt{n\_jobs} exceeds the number of survivors, the code increases the \texttt{pyvinecopulib} \texttt{num\_threads} of each base learner to utilize all cores.
  \item The selection routine \texttt{da\_mcs\_marg}  (module \texttt{mcs.py}) implement the procedures described in \cref{sec:mcs_details}, with the refinement described in \cref{sec:mcs_complexity} to make it scale linearly with the number of candidates.
  \item The two derived classes \texttt{VineForestDensity} and \texttt{VineForestRegressor} only implement respectively the \texttt{score/score\_sample}, and \texttt{predict} methods, which call the appropriate methods of the base learners and aggregate results.
\end{itemize}

\subsection{Other Components}

\paragraph{Determinism and Performance} All randomization is seeded through a user--provided integer seed; random vine structures derive from \texttt{numpy} RNG streams. Batch size controls the memory/runtime trade-off for likelihood evaluation and prediction. Vectorization is used throughout, and heavy loops are limited to batched calls into \texttt{pyvinecopulib}. Where feasible, computations are parallelized via \texttt{joblib} and intra-estimator threads in \texttt{pyvinecopulib}.

\paragraph{Benchmarks and R Interfacing} The classes \texttt{KrausVineRegressor} and \texttt{KrausVineDensity} are wrappers for the R packages \texttt{vinereg} and \texttt{kdevine}, respectively, through \texttt{rpy2}. A small mixin, \texttt{RInterface}, ensures silent R output, proper conversion of \texttt{pandas} categoricals to R factors, and robust conversion between \texttt{numpy}/\texttt{pandas} and R objects.
The two classes are also derived from \texttt{VineRegressor} and \texttt{VineDensity}, respectively, to ensure compatibility with the \texttt{scikit-learn} API and the rest of the codebase.

\paragraph{Experiments}
The file \texttt{utils.py} offers: (i) \texttt{expand\_factors} for categorical handling; (ii) \texttt{crps\_from\_quantiles} implementing Simpson--rule CRPS from predictive quantiles.
As for the file \texttt{real\_data.py}, it contains lightweight loaders for the UCI/OpenML datasets used in the paper and applies standardization where appropriate; when categorical variables are present (e.g., Bike Sharing), ordered categoricals are preserved for compatibility with \texttt{VineRegressor}.
The rest of the module contains a CLI interface to run experiments, which is used in the provided Jupyter notebooks to reproduce results.

\section{Additional Experiments} \label{sec:additional_experiments}

\subsection{Results for Median Regression}

In \cref{tab:median}, we report the average mean absolute error (MAE) on test data for median regressions using the same experimental setup as in our main paper.
We observe similar trends as for mean regression and probabilistic forecasting. Our random search methods outperform the greedy \texttt{Dissmann} and \texttt{Kraus} algorithms, with \texttt{RS-E} generally performing best.

\begin{table}[!t]\small
\caption{Rounded test MAE (standard error) for median regressions. Best results in bold.}
\label{tab:median}
\begin{tabular*}{\linewidth}{@{\extracolsep{\fill}}lllllll}
\toprule
 & Energy & Concrete & Airfoil & Wine & Ccpp & California \\ 
\midrule\addlinespace[2.5pt]
\texttt{Dissmann} & 1.85 (0.06) & 5.38 (0.10) & 3.31 (0.05) & 0.43 (0.01) & 3.15 (0.01) & 0.46 (0.00) \\
\texttt{Kraus} & 1.93 (0.08) & 5.55 (0.08) & 3.30 (0.06) & 0.44 (0.01) & 3.32 (0.01) & 0.45 (0.00) \\
RS-B (50) & 1.56 (0.11) & 5.35 (0.12) & 2.71 (0.04) & 0.43 (0.01) & 3.14 (0.02) & 0.43 (0.00) \\
RS-B (100) & 1.59 (0.20) & 5.30 (0.11) & 2.75 (0.05) & 0.43 (0.01) & 3.13 (0.02) & 0.42 (0.00) \\
RS-B (500) & 1.32 (0.10) & 5.28 (0.12) & \textbf{2.69 (0.04)} & 0.42 (0.01) & 3.15 (0.02) & 0.42 (0.00) \\
RS-E (50) & 1.15 (0.04) & 4.74 (0.09) & 2.69 (0.04) & \textbf{0.40 (0.00)} & 3.13 (0.02) & 0.40 (0.00) \\
RS-E (100) & 1.15 (0.04) & 4.71 (0.08) & 2.75 (0.05) & 0.40 (0.00) & 3.13 (0.02) & 0.41 (0.00) \\
RS-E (500) & \textbf{1.05 (0.02)} & \textbf{4.70 (0.07)} & 2.75 (0.05) & 0.40 (0.00) & \textbf{3.12 (0.01)} & \textbf{0.40 (0.00)} \\
\bottomrule
\end{tabular*}

\end{table}

\subsection{Runtime Comparison on Other Datasets}

In \cref{fig:cpu_time_all}, we show the relative CPU time of our random search methods compared to \texttt{Dissmann} on all data sets.
As expected, the CPU time grows roughly linearly with $M$ in training, irrespective of the random search method or the task.
On the other hand, there is no increase in inference time for \texttt{RS-B}, since only one model is used for prediction.
For \texttt{RS-E}, the inference time can grow with $M$, since some fraction of the candidates are retained in the MCS and used for prediction.
However, this increase is not systematic, as there can sometimes be a single (or a few) model(s) in the MCS, irrespective of $M$.

In \cref{tab:cpu_time}, we report the median CPU time for training and inference using a single vine on all data sets.
Except for the \texttt{California} data set, training and inference using a single vine is very fast, taking at most around one second for training and less than 0.5 seconds for inference on a single CPU core.
Note that, for inference, the reported CPU time is for predicting the mean, median, and 101 quantiles (from 0\% to 100\% with a step of 1\%) of the conditional distribution of the response given the features at all test points.
With around 20'000 data points, the \texttt{California} data set is the largest one considered in our experiments, and for such sizes we would need to optimize how pair-copulas are fitted to further reduce the computational cost.
This can be done, for instance, by using a smaller \texttt{TLL} grid size initially (i.e., before applying the MCS), and then re-fitting the selected candidates with a larger grid size.
\begin{figure}[t]
    \centering
    \includegraphics[width=1\linewidth]{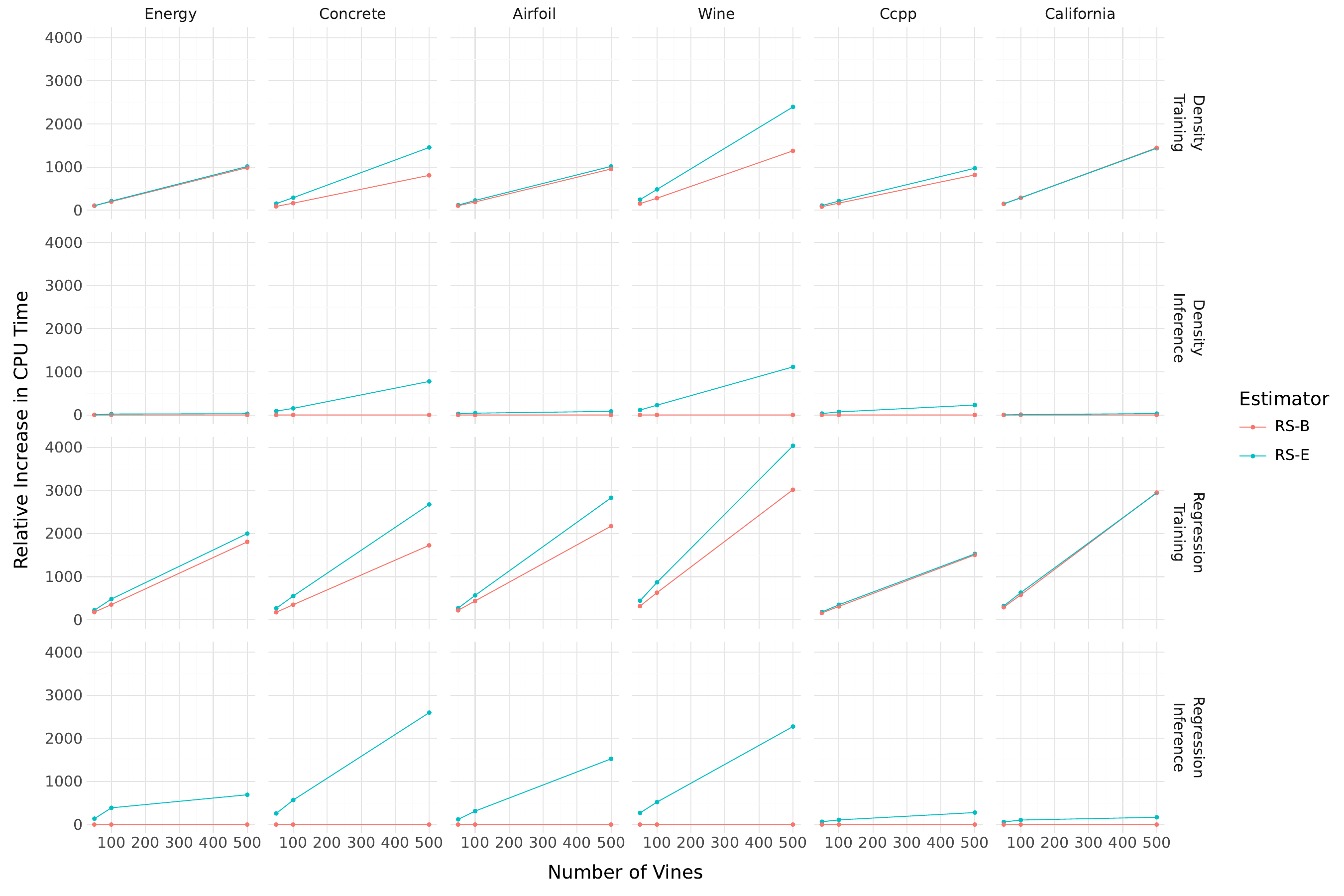} 
    \caption{Relative CPU time of random search methods compared to \texttt{Dissmann} on all data sets.}
    \label{fig:cpu_time_all}
\end{figure}

\begin{table}[!t]\small
\caption{Median CPU time (in seconds) for training and inference using a single vine.}
\label{tab:cpu_time}
\begin{tabular*}{\linewidth}{@{\extracolsep{\fill}}lrrrrrr}
\toprule
 & \multicolumn{2}{c}{Energy} & \multicolumn{2}{c}{Concrete} & \multicolumn{2}{c}{Airfoil}  \\ 
\cmidrule(lr){2-3} \cmidrule(lr){4-5} \cmidrule(lr){6-7} 
 & Training & Inference & Training & Inference & Training & Inference \\ 
\midrule\addlinespace[2.5pt]
Density & 0.311 & 0.005 & 0.367 & 0.006 & 0.204 & 0.004  \\
Regression & 0.382 & 0.33 & 0.415 & 0.453 & 0.213 & 0.313 \\
\midrule\addlinespace[2.5pt]
 & \multicolumn{2}{c}{Wine} & \multicolumn{2}{c}{Ccpp} & \multicolumn{2}{c}{California}  \\ 
\cmidrule(lr){2-3} \cmidrule(lr){4-5} \cmidrule(lr){6-7} 
 & Training & Inference & Training & Inference & Training & Inference \\ 
\midrule\addlinespace[2.5pt]
Density  & 0.644 & 0.015 & 0.742 & 0.01 & 6.141 & 0.047 \\
Regression & 0.659 & 1.141 & 1.011 & 1.531 & 5.109 & 7.995 \\
\bottomrule
\end{tabular*}

\end{table}

\subsection{Generating vines by local perturbations} \label{sec:wilson}

In this appendix, we report additional experiments comparing uniform random search to a locally perturbed version of Dissmann's algorithm.
Specifically, instead of selecting the maximum spanning tree (MST) obtained from the empirical Kendall's $\tau$ matrix at each tree level, we generate random trees according to distribution
\begin{align*}
  P(T) \propto \prod_{(j,k) \in T} |\tau_{j,k}|,
\end{align*}
with the MST being the mode.
Uniform sampling from this distribution can be done efficiently via Wilson's loop-erased random walk algorithm \citep{wilson1996generating}, implementented e.g. in \texttt{boost} \citep{boost} and made available in the vine selection context by \texttt{pyvinecopulib} \citep{pyvinecopulib}.

\begin{table}[th]\small
\centering
\caption{Additional results for the experiment in \cref{tab:NLL} for a variant of \cref{alg:MCS} that generates vines by sampling local perturbations of the Dissmann structure.}
\label{tab:local-perturbation-baseline}
\begin{tabular}{lcccccc}
\toprule
 & Energy & Concrete & Airfoil & Wine & Ccpp & California \\
\midrule
\texttt{RS-B (50)}  & 1.54 (0.28)  & 7.1 (0.17)  & 3.11 (0.06) & 8.47 (0.27) & 6.77 (0.01) & 3.56 (0.28) \\
\texttt{RS-B (100)} & 1.17 (0.40)   & 7.11 (0.16) & 3.1 (0.07)  & 8.47 (0.27) & 6.77 (0.01) & 3.57 (0.28) \\
\texttt{RS-B (500)} & 0.48 (0.67)  & 6.99 (0.18) & 3.11 (0.06) & 8.47 (0.27) & 6.76 (0.01) & 3.58 (0.28) \\
\texttt{RS-E (50)}  & 0.53 (0.37)  & 6.63 (0.13) & 3.06 (0.07) & 8.19 (0.33) & 6.75 (0.01) & 3.51 (0.26) \\
\texttt{RS-E (100)} & 0.19 (0.33)  & 6.61 (0.12) & 3.06 (0.07) & 8.09 (0.33) & 6.75 (0.01) & 3.49 (0.27) \\
\texttt{RS-E (500)} & -0.11 (0.36) & 6.57 (0.15) & 3.07 (0.07) & 7.8 (0.21)  & 6.74 (0.01) & 3.46 (0.27) \\
\bottomrule
\end{tabular}
\end{table}

\cref{tab:local-perturbation-baseline} reports results for the experiment in \cref{tab:NLL} of the main text. 
We see that uniform random search slightly outperforms the local perturbation.
From a theoretical standpoint, there is no reason to believe that a greedy approach will find a good structure. In fact, it is easy to simulate data where greedy methods (and perturbed variants) fail catastrophically: construct a vine where the first tree consists of weak dependencies, while the later trees contain strong dependencies and asymmetric relationships. The greedy method will pick a structure that severely violates the simplified vine assumption (see Section 2.3), and many randomly sampled structures lead to better performance. However, alternative local methods may be interesting to explore in future work.




\end{document}